\newtheorem{theorem}{Theorem}[section]
\newtheorem{corollary}[theorem]{Corollary}
\newtheorem{definition}[theorem]{Definition}
\newtheorem{lemma}[theorem]{Lemma}
\newtheorem{observation}[theorem]{Observation}
\newtheorem{claim}[theorem]{Claim}
\newtheorem{inf_theorem}[theorem]{Informal Theorem}
\newtheorem{remark}[theorem]{Remark}
\newenvironment{proof}[1][Proof]{\textbf{#1.} }{\ \rule{0.5em}{0.5em}}
\begin{document}


\title{Tight Bounds on 3-Team Manipulations in Randomized Death Match}

\author{
  Atanas Dinev\\
  Massachusetts Institute of Technology, Cambridge MA, USA
  \and
  S. Matthew Weinberg\\
  Princeton University, Princeton, USA\\
}

\date{\today}
\maketitle

\begin{abstract}

Consider a round-robin tournament on $n$ teams, where a winner must be (possibly randomly) selected as a function of the results from the $\binom{n}{2}$ pairwise matches. A tournament rule is said to be $k$-SNM-$\alpha$ if no set of $k$ teams can ever manipulate the $\binom{k}{2}$ pairwise matches between them to improve the joint probability that one of these $k$ teams wins by more than $\alpha$. Prior work identifies multiple simple tournament rules that are $2$-SNM-$1/3$ (Randomized Single Elimination Bracket~\cite{ssw17}, Randomized King of the Hill~\cite{swzz20}, Randomized Death Match~\cite{dw21}), which is optimal for $k=2$ among all Condorcet-consistent rules (that is, rules that select an undefeated team with probability $1$). 

Our main result establishes that Randomized Death Match is $3$-SNM-$(31/60)$, which is tight (for Randomized Death Match). This is the first tight analysis of any Condorcet-consistent tournament rule and at least three manipulating teams. Our proof approach is novel in this domain: we explicitly find the most-manipulable tournament, and directly show that no other tournament can be more manipulable.

In addition to our main result, we establish that Randomized Death Match disincentivizes Sybil attacks (where a team enters multiple copies of themselves into the tournament, and arbitrarily manipulates the outcomes of matches between their copies). Specifically, for any tournament, and any team $u$ that is not a Condorcet winner, the probability that $u$ or one of its Sybils wins in Randomized Death Match approaches $0$ as the number of Sybils approaches $\infty$.
\end{abstract}




\section{Introduction}
Consider a tournament on $n$ teams competing to win a single prize via $\binom{n}{2}$ pairwise matches. A tournament rule is a (possibly randomized) map  from these $\binom{n}{2}$ matches to a single winner. In line with several recent works~\cite{AK10,APT09,
ssw17,swzz20,dw21}, we study rules that satisfy some notion of fairness (that is, ``better'' teams should be more likely to win), and non-manipulability (that is, teams have no incentive to manipulate the matches).

More specifically, prior work identifies \textit{Condorcet-consistence} (Definition \ref{condorcetconsistent}) as one desirable property of tournament rules: whenever an undefeated team exists, a Condorcet-consistent rule selects that team as the winner with probability $1$. Another desirable property is \textit{monotonicity} (Definition \ref{monotonedef}): no team can unilaterally increase the probability that it wins by throwing a single match. Arguably, any sensible tournament rule should at minimum satisfy these two basic properties, and numerous such simple rules exist.

\cite{APT09,AK10} further considered the following type of deviation: what if the same company sponsors multiple teams in an eSports tournament, and wants to maximize the probability that one of them wins the top prize?\footnote{Similarly, perhaps there are multiple athletes representing the same country or university in a traditional sports tournament.} In principle, these teams might manipulate the outcomes of the matches they play amongst themselves in order to achieve this outcome. Specifically, they call a tournament rule $k$-Strongly-Non-Manipulable ($k$-SNM, Definition \ref{manipulatingatournament}), if no set of $k$ teams can successfully manipulate the $\binom{k}{2}$ pairwise matches amongst themselves to improve the probability that one of these $k$ teams wins the tournament. Unfortunately, even for $k=2$,~\cite{APT09, AK10} establish that no tournament rule is both Condorcet-consistent and $2$-SNM. 

This motivated recent work in~\cite{ssw17,swzz20,dw21} to design tournament rules which are Condorcet-consistent \emph{as non-manipulable as possible}. Specifically,~\cite{ssw17} defines a tournament rule to be $k$-SNM-$\alpha$ if no set of $k$ teams can manipulate the $\binom{k}{2}$ pairwise matches amongst themselves to increase total probability that any of these $k$ teams wins \emph{by more than $\alpha$} (see Definition~\ref{manipulatingatournament}). These works design several simple Condorcet-consistent and $2$-SNM-$1/3$ tournament rules, which is optimal for $k=2$~\cite{ssw17}. In fact, the state of affairs is now fairly advanced for $k=2$: each of~\cite{ssw17, swzz20, dw21} proposes a new $2$-SNM-$1/3$ tournament rule.~\cite{swzz20} considers a stronger fairness notion that they term Cover-consistent, and~\cite{dw21} considers probabilistic tournaments (see Section~\ref{sec:related} for further discussion).

However, \emph{significantly} less is known for $k > 2$. Indeed, only~\cite{swzz20} analyzes manipulability for $k > 2$. They design a rule that is $k$-SNM-$2/3$ for all $k$, but that rule is non-monotone, and it is unknown whether their analysis of that rule is tight. Our main result provides a tight analysis of the manipulability of Randomized Death Match~\cite{dw21} when $k=3$. We remark that this is: a) the first tight analysis of the manipulability of any Condorcet-consistent tournament rule for $k>2$, b) the first analysis establishing a monotone tournament rule that is $k$-SNM-$\alpha$ for any $k > 2$ and $\alpha < 1$, and c) the strongest analysis to-date of any tournament rule (monotone or not) for $k=3$. We overview our main result in more detail in Section~\ref{sec:results} below.\\
\indent Beyond our main result, we further consider manipulations through a \emph{Sybil attack} (Definition \ref{sybil_attack_def}). As a motivating example, imagine that a tournament rule is used as a proxy for a voting rule to select a proposal (voters compare each pair of proposals head-to-head, and this constitutes the pairwise matches input to a tournament rule). A proposer may attempt to manipulate the protocol with a Sybil attack, by submitting numerous nearly-identical clones of the same proposal. This manipulates the original tournament, with a single node $u_1$ corresponding to the proposal, into a new one with additional nodes $u_2,\ldots, u_m$ corresponding to the Sybils. Each node $v \notin \{u_1,\ldots, u_m\}$ either beats all the Sybils, or none of them (because the Sybil proposals are essentially identical to the original). The questions then become: Can the proposer profitably manipulate the matches within the Sybils? Is it beneficial for a proposer to submit as many Sybils as possible? We first show that, when participating in Randomized Death Match, the Sybils can't gain anything by manipulating the matches between them. Perhaps more surprisingly, we show that Randomized Death Match is \emph{Asymptotically Strongly Sybil-Proof}: as the number of Sybils approaches $\infty$, the collective probability that a Sybil wins RDM approaches \emph{zero} (unless the original proposal is a Condorcet winner, in which case the probability that a Sybil wins is equal to $1$, for any number of Sybils $> 0$).

\subsection{Our Results}\label{sec:results}

As previously noted, our main result is a tight analysis of the manipulability of Randomized Death Match (RDM) for coalitions of size $3$. Randomized Death Match is the following simple rule: pick two uniformly random teams who have not yet been eliminated, and eliminate the loser of their head-to-head match. 
\begin{inf_theorem} (See Theorem \ref{theorem2})
RDM is 3-SNM-$\frac{31}{60}$. RDM is not 3-SNM-$\alpha$ for $\alpha < \frac{31}{60}$.
\end{inf_theorem}

Recall that this is the first tight analysis of any Condorcet-consistent tournament rule for any $k>2$ and the first analysis establishing a monotone, Condorcet-consistent tournament rule that is $k$-SNM-$\alpha$ for any $k > 2$, $\alpha < 1$. Recall also that previously the smallest $\alpha$ for which a $3$-SNM-$\alpha$ (non-monotone) Condorcet-consistent tournament rule is known is $2/3$.

Our second result concerns manipulation by Sybil attacks. A Sybil attack is where one team starts from a base tournament $T$, and adds some number $m-1$ of clones of their team to create a new tournament $T'$ (they can arbitrarily control the matches within their Sybils, but each Sybil beats exactly the same set of teams as the cloned team) (See Definition \ref{sybil_attack_def}). We say that a tournament rule $r$ is \emph{Asymptotically Strongly Sybil-Proof} (Definition \ref{ASSP_def}) if for any tournament $T$ and team $u_1 \in T$ that is not a Condorcet winner, the maximum collective probability that a Sybil wins (under $r$) over all of $u_1$'s Sybil attacks with $m$ Sybils goes to 0 as $m$ goes to infinity. See Section \ref{sec:prelim} for a formal definition.


\begin{inf_theorem}(See Theorem~\ref{theoremcopiesto0}) RDM is Asymptotically Strongly Sybil-Proof.
\end{inf_theorem}


\subsection{Technical Highlight}
All prior work establishing that a particular tournament rule is $2$-SNM-$1/3$ follows a similar outline: for any $T$, cases where manipulating the $\{u,v\}$ match could potentially improve the chances of winning are coupled with two cases where manipulation cannot help. By using such a coupling argument, it is plausible that one can show that RDM is $3$-SNM-$(\frac{1}{2} + c)$ for a small constant $c$. However, given that Theorem~\ref{theorem2} establishes that RDM is $3$-SNM-$31/60$, it is hard to imagine that this coupling approach will be tractable to obtain the exact answer. 

Our approach is instead drastically different: we find a particular 5-team tournament, and a manipulation by $3$ teams that gains $31/60$, and directly prove that this must be the worst case. We implement our approach using a first-step analysis, thinking of the first match played in RDM on an $n$-team tournament as producing a distribution over $(n-1)$-team tournaments. 

The complete analysis inevitably requires some careful case analysis, but is tractable to execute fully by hand. Although this may no longer be the case for future work that considers larger $k$ or more sophisticated tournament rules, our approach will still be useful to limit the space of potential worst-case examples.


\subsection{Related Work}\label{sec:related}
There is a vast literature on tournament rules, both within Social Choice Theory, and within the broad CS community~\cite{Ban85,Fis77,FR92,LLB93,KSW16,KW15,Mil80,SW11}. The Handbook of Computational Social Choice provides an excellent survey of this broad field, which we cannot overview in its entirety~\cite{BCELP}. Our work considers the model initially posed in~\cite{AK10,APT09}, and continued in~\cite{dw21,ssw17,swzz20}, which we overview in more detail below.

\cite{AK10,APT09} were the first to consider 
Tournament rules that are both Condorcet-consistent and $2$-SNM, and proved that no such rules exist. They further considered tournament rules that are $2$-SNM and approximately Condorcet-consistent.~\cite{ssw17} first proposed to consider tournament rules that are instead Condorcet-consistent and approximately $2$-SNM. Their work establishes that Randomized Single Elimination Bracket is $2$-SNM-$1/3$, and that this is tight.\footnote{Randomized Single Elimination Bracket iteratively places the teams, randomly, into a single-elimination bracket, and then `plays' all matches that would occur in this bracket to determine a winner.}
\cite{swzz20} establish that Randomized King of the Hill (RKotH) is $2$-SNM-$1/3$,\footnote{Randomized King of the Hill iteratively picks a `prince', and eliminates all teams beaten by the prince, until only one team remains.} and~\cite{dw21} establish that Randomized Death Match is $2$-SNM-$1/3$.~\cite{swzz20} show further that RKotH satisfies a stronger fairness notion called Cover-consistence, and~\cite{dw21} extends their analysis to probabilistic tournaments. In summary, the state of affairs for $k=2$ is quite established: multiple $2$-SNM-$1/3$ tournament rules are known, and multiple different extensions beyond the initial model of~\cite{ssw17} are known.

For $k>2$, however, significantly less is known.~\cite{ssw17} gives a simple example establishing that no rule is $k$-SNM-$\frac{k-1-\varepsilon}{2k-1}$ for any $\varepsilon > 0$, but no rules are known to match this bound for any $k > 2$. Indeed,~\cite{swzz20} shows that this bound is not tight, and proves a stronger lower bound for $k \rightarrow \infty$. For example, a corollary of their main result is that no $939$-SNM-$1/2$ tournament rule exists. They also design a non-monotone tournament rule that is $k$-SNM-$2/3$ for all $k$. Other than these results, there is no prior work for manipulating sets of size $k > 2$. In comparison, our work is the first to give a tight analysis of any Condorcet-consistent tournament rule for $k > 2$, and is the first proof that any monotone, Condorcet-consistent tournament rule is $k$-SNM-$\alpha$ for any $k > 2,\alpha < 1$. \\
\indent Regarding our study of Sybil attacks, similar clone manipulations have been considered prior in Social Choice Theory under the name of \emph{composition-consistency}.  \cite{lll96} introduces the notion of a \emph{decomposition} of the teams in a tournament into components, where all the teams in a component are clones of each other with respect to the teams not in the component. \cite{lll96} defines a deterministic tournament rule to be \emph{composition-consistent} if it chooses the best teams from the best components\footnote{For a full rigorous mathematical definition see Definition 10, \cite{lll96}}. In particular, composition-consistency implies that a losing team cannot win by introducing clones of itself or any other team. \cite{lll96} shows that the tournament rules Banks, Uncovered Set, TEQ, and Minimal
Covering Set are \emph{composition-consistent}, while Top Cycle, the Slater, and the Copeland are not. Both computational and axiomatic aspects of \textit{composition-consistency} have been explored ever since. \cite{efs12} studies the structural properties of clone sets and their computational aspects in the context of voting preferences. In the context of probabilistic social choice, \cite{bbs16} gives probabilistic extensions of the axioms \emph{composition-consistency} and \emph{population-consistency} and uniquely characterize the probabilistic social choice rules, which satisfy both. In the context of scoring rules, \cite{o20} studies the incompatibility of \emph{composition-consistency} and \emph{reinforcement} (stronger than \emph{population-consistency}) and decomposes composition-consistency into four weaker axioms. In this work, we consider Sybil attacks on Randomized Death Match. Our study of Sybil attacks differs from prior work on the relevant notion of \emph{composition-consistency} in the following ways: (i) We focus on a randomized tournament rule (RDM), (ii) We study settings where the manipulator creates clones of themselves (i.e. not of other teams), (iii) We explore the asymptotic behavior of such manipulations (Definition \ref{ASSP_def}, Theorem \ref{theoremcopiesto0}).



\subsection{Roadmap}
Section \ref{sec:prelim} follows with definitions and preliminaries, and formally defines Randomized Death Match (RDM). Section \ref{examples} introduces some basic properties and examples for the RDM rule as well as a recap of previous work for two manipulators. Section \ref{rdm3} consists of a proof that the manipulability of 3 teams in RDM is at most $\frac{31}{60}$ and that this bound is tight. Section \ref{copiessection} consists of our main results regarding Sybil attacks on a tournament. Section ~\ref{sec:conclusion} concludes. 

\section{Preliminaries}\label{sec:prelim}
In this section we introduce notation that we will use throughout the paper consistent with prior work in \cite{dw21,ssw17,swzz20}.
\begin{definition}[Tournament]
A (round robin) tournament $T$ on $n$ teams is a complete,
directed graph on $n$ vertices whose edges denote the outcome of a match between two teams. Team $i$ beats
team $j$ if the edge between them points from $i$ to $j$.
\end{definition}

\begin{definition}[Tournament Rule]
A tournament rule $r$ is a function that maps tournaments $T$ to a distribution over teams, where $r_i(T) := \Pr(r(T) = i)$ denotes the probability that team $i$ is
declared the winner of tournament $T$ under rule $r$. We use the shorthand $r_S(T) := \sum_{i\in S} r_i(T)$ to denote
the probability that a team in $S$ is declared the winner of tournament $T$ under rule $r$.
\end{definition}

\begin{definition}[$S$-adjacent]\label{s_adjacent}
Two tournaments $T,T'$ are $S$-adjacent if for all $i,j$ such that $\{i,j\} \not \subseteq S$, $i$ beats $j$ in $T$ if and only if $i$ beats $j$ in $T'$. 
\end{definition}

In other words, two tournaments $T,T'$ are $S$-adjacent if the teams from $S$ can manipulate the outcomes of the matches between them in order to obtain a new tournament $T'$. 

\begin{definition}[Condorcet-Consistent]\label{condorcetconsistent}
Team $i$ is a Condorcet winner of a tournament $T$ if $i$ beats every
other team (under $T$). A tournament rule $r$ is Condorcet-consistent if for every tournament $T$ with a
Condorcet winner $i$, $r_i(T) = 1$ (whenever $T$ has a Condorcet winner, that team wins with probability 1).
\end{definition}

\begin{definition}[Manipulating a Tournament]\label{manipulatingatournament}
For a set $S$ of teams, a tournament $T$ and a tournament rule $r$, we define $\alpha_S^r(T)$ be the maximum winning probability that $S$ can possibly gain by manipulating $T$ to an $S$-adjacent $T'$. That is: 
$$\alpha_S^r(T) = \max_{\textit{T': T' is S-adjacent to T}} \{r_{S}(T') - r_{S}(T)\}$$
For a tournament rule $r$, define $\alpha_{k,n}^r = \sup_{T,S: |S|= k, |T| = n} \{\alpha_S^r(T)\}$. Finally, define 
$$\alpha_{k}^r = \sup_{n \in \mathbb{N}}\alpha_{k,n}^r =  \sup_{T,S: |S|= k} \{\alpha_S^r(T)\}$$
If $\alpha_k^r \leq \alpha$, we say that $r$ is $k$-Strongly-Non-Manipulable at probability $\alpha$ or $k$-SNM-$\alpha$. 
\end{definition}

Intuitively, $\alpha_{k,n}^r$ is the maximum increase in collective winning probability that a group of $k$ teams can achieve by manipulating the matches between them over tournaments with $n$ teams. And $\alpha_{k}^r$ is the maximum increases in winning probability that a group of $k$ teams can achieve by manipulating the matches between them over all tournaments. \\

Two other naturally desirable properties of a tournament rule are monotonicity and anonymity.
\begin{definition}[Monotone]\label{monotonedef}
A tournament rule $r$ is monotone if $T,T'$ are $\{u,v\}$-adjacent and $u$ beats $v$ in $T$, then $r_{u}(T) \geq r_{u}(T')$
\end{definition}

\begin{definition}[Anonymous]\label{anonimousdef}
A tournament rule $r$ is anonymous if for every tournament $T$, and every permutation $\sigma$, and all $i$, $r_{\sigma(i)}(\sigma(T)) = r_{i}(T)$
\end{definition}

Below we define the tournament rule that is the focus of this work.

\begin{definition}[Randomized Death Match]
Given a tournament $T$ on $n$ teams the Randomized Death Match Rule (RDM) picks two uniformly random teams (without replacement) and plays their match. Then, eliminates the loser and recurses on the remaining teams for a total of $n-1$ rounds until a single team remains, who is declared the winner. 
\end{definition}

Below we define the notions of \textit{Sybil Attack} on a tournament $T$, and the property of \textit{Asymptotically Strongly Sybil-Proof} (ASSP) for a tournament rule $r$, both of which will be relevant in our discussion in Section 5. 

\begin{definition}[Sybil Attack]\label{sybil_attack_def}
Given a tournament $T$, a team $u_1 \in T$ and an integer $m$, define $Syb(T,u_1,m)$ to be the set of tournaments $T'$ satisfying the following properties: \\
\indent 1. The set of teams in $T'$ consists of $u_2 \ldots, u_m$ and all teams in $T$\\
\indent 2. If $a,b$ are teams in $T$, then $a$ beats $b$ in $T'$ if and only if $a$ beats $b$ in $T$. \\
\indent 3. If $a \neq u_1$ is a team in $T$ and $i \in [m]$, then $u_i$ beats $a$ in $T'$ if and only if $u_1$ beats $a$ in $T$  \\
\indent 4. The match between $u_i$ and $u_j$ can be arbitrary for each $i \neq j$
\end{definition}

Intuitively, $Syb(T,u_1,m)$ is the set of all Sybil attacks of $u_1$ at $T$ with $m$ Sybils. Each Sybil attack is a tournament $T' \in Syb(T,u_1,m)$ obtained by starting from $T$ and creating $m$ Sybils of $u_1$ (while counting $u_1$ as a Sybil of itself). Each Sybil beats the same set of teams from $T \setminus u_1$ and the matches between the Sybils $u_1, \ldots, u_m$ can be arbitrary. Every possible realization of the matches between the Sybils gives rise to new tournament $T' \in Syb(T,u_1, m)$ (implying $Syb(T,u_1, m)$ contains $2^{\binom{m}{2}}$ tournaments). 

\begin{definition}[Asymptotically Strongly Sybil-Proof]\label{ASSP_def}
A tournament rule $r$ is \textit{Asymptotically Strongly Sybil-Proof} (ASSP) if for any tournament $T$, team $u_1 \in T$ which is not a Condorcet winner,
$$\lim_{m \to \infty}\max_{T' \in Syb(T,u_1,m)} r_{u_1, \ldots, u_m}(T') = 0$$
\end{definition}
Informally speaking, Definition \ref{ASSP_def} claims that $r$ is ASSP if the probability that a Sybil wins in the most profitable Sybil attack on $T$ with $m$ Sybils, goes to zero as $m$ goes to $\infty$.

\section{Basic Properties of RDM and Examples}\label{examples} 
In this section we consider a few basic properties of RDM and several examples on small tournaments. We will refer to those examples in our analysis later. Throughout the paper we will denote RDM by $r$ and it will be the only tournament rule we consider. We next state the first-step analysis observation that will be central to our analysis throughout the paper. For the remainder of the section let for a match $e$ denote by $T|_e$ the tournament obtained from $T$ by eliminating the loser in $e$. Let $S|_e = S \setminus x$, where $x$ is the loser in $e$. Let $d_x$ denote the number of teams $x$ loses to and $T \setminus x$ the tournament obtained after removing team $x$ from $T$. 
\begin{observation}[First-step analysis]\label{obs_fsa}
Let $S$ be a subset of teams in a tournament $T$. Then
$$r_{S}(T) = \frac{1}{\binom{n}{2}}\sum_{e}r_{S|_e}(T|_e) = \frac{1}{\binom{n}{2}}\sum_{x} d_x r_{S \setminus x}(T \setminus x)$$
(if $S = \{v\}$, then we define $r_{S \setminus v}(T \setminus v) = 0$, and if $x \not \in S$, then $S \setminus x = S$)
\end{observation}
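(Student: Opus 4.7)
The plan is to apply the law of total probability, conditioning on the identity of the first match played by RDM. By definition of the rule, the first match is chosen uniformly at random from the $\binom{n}{2}$ matches of $T$, so for any event $A$ we have $\Pr(A) = \frac{1}{\binom{n}{2}}\sum_e \Pr(A \mid \text{first match is } e)$. I will apply this with $A$ being the event that the ultimate RDM winner belongs to $S$.

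For the first equality, I would argue that conditioned on the first match being $e$, the loser of $e$ is eliminated and the remaining process is precisely RDM run on the $(n-1)$-team tournament $T|_e$. Hence the conditional probability that some team in $S$ wins equals the probability that some team in $S \cap (\text{teams of }T|_e) = S|_e$ wins under RDM on $T|_e$, which is exactly $r_{S|_e}(T|_e)$. Here it is worth noting that this works both when the loser of $e$ lies in $S$ (in which case $S|_e = S \setminus \{\text{loser}\}$) and when it does not (in which case $S|_e = S$), so no case split is needed. Substituting into the total-probability expression yields the first equality.

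For the second equality, I would reindex the sum over matches $e$ as a sum over the losing team $x$. Fix a team $x$ in $T$: the matches $e$ for which $x$ is the loser are exactly the matches $\{x,y\}$ with $y$ a team that beats $x$, of which there are $d_x$ by definition. For each such $e$, the elimination leaves $T \setminus x$ and $S \setminus x$ (interpreting $S \setminus x = S$ when $x \notin S$). Therefore
\[
\sum_e r_{S|_e}(T|_e) \;=\; \sum_x d_x\, r_{S \setminus x}(T \setminus x),
\]
which gives the second equality after dividing by $\binom{n}{2}$.

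I do not anticipate a serious obstacle: this is essentially the definition of RDM unrolled one step. The only subtlety is the degenerate base case in which $S = \{v\}$ and the first match eliminates $v$; the stated convention $r_{S \setminus v}(T \setminus v) = 0$ in that case is exactly what is needed so that the formula correctly contributes zero probability from matches eliminating the sole member of $S$.
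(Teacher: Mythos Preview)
Your proposal is correct and follows essentially the same approach as the paper: the first equality is justified by conditioning on the uniformly random first match, and the second by reindexing the sum over matches according to the losing team, noting that each team $x$ loses exactly $d_x$ matches.
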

\begin{proof}
The first equality follows from the fact that after we play $e$ we are left with the tournament $T|_e$ and we sum over all possible $e$ in the first round. To prove the second equality, notice that for any $x$ the term $r_{S \setminus x}(T \setminus x)$ appears exactly $d_x$ times in $\sum_{e} r_{S|_e}(T|_e)$ because $x$ loses exactly $d_x$ matches. 
\end{proof}\\

As a first illustration of first-step analysis we show that adding teams which lose to every other team does not change the probability distribution of the winner. 
\begin{lemma}\label{lemma1}
Let $T$ be a tournament and $u \in T$ loses to everyone. Then for all $v \neq u$, we have $r_v(T) = r_v(T \setminus u)$.
\end{lemma}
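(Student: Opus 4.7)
The plan is to proceed by induction on $n = |T|$, using the first-step analysis from Observation \ref{obs_fsa} applied to both $T$ and $T \setminus u$. The base case $n = 2$ is immediate: $u$ loses to $v$ with probability $1$, so $r_v(T) = 1 = r_v(T \setminus u)$.

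For the inductive step, the key structural fact I will exploit is that, because $u$ loses to every team, removing $u$ does not change the loss-count of any other team: $d_x(T) = d_x(T \setminus u)$ for every $x \neq u$, and $d_u(T) = n-1$. Applying Observation \ref{obs_fsa} to $T$ with $S = \{v\}$ gives
\[
r_v(T) = \frac{1}{\binom{n}{2}}\left[(n-1)\, r_v(T \setminus u) + \sum_{x \neq u, v} d_x(T)\, r_v(T \setminus x)\right],
\]
where the $(n-1)\,r_v(T \setminus u)$ term collects the contribution from eliminating $u$ first, and the $x = v$ term vanishes by the convention in the observation.

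Now I would apply the inductive hypothesis to each $T \setminus x$ for $x \neq u, v$: since $u$ still loses to everyone in $T \setminus x$, we have $r_v(T \setminus x) = r_v((T \setminus x) \setminus u) = r_v((T \setminus u) \setminus x)$. Combined with $d_x(T) = d_x(T \setminus u)$, the residual sum becomes exactly what Observation \ref{obs_fsa} produces when applied to $T \setminus u$ (which has $n-1$ teams):
\[
\sum_{x \neq u, v} d_x(T \setminus u)\, r_v((T \setminus u) \setminus x) = \binom{n-1}{2}\, r_v(T \setminus u).
\]
Substituting back and using $(n-1) + \binom{n-1}{2} = \binom{n}{2}$ yields $r_v(T) = r_v(T \setminus u)$, completing the induction.

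There is no serious obstacle here; the proof is essentially a bookkeeping exercise. The only thing to be careful about is the handling of the cases $x = v$ and $x = u$ in the first-step sum, and verifying that the identity $d_x(T) = d_x(T \setminus u)$ really does hold (which it does precisely because $u$ loses every match, so no $x \neq u$ has $u$ contributing to its loss count). Everything else is algebraic simplification that collapses cleanly to $\binom{n}{2}$.
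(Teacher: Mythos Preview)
Your proof is correct and follows essentially the same approach as the paper: induction on $|T|$, first-step analysis on $T$, applying the inductive hypothesis to replace $r_v(T\setminus x)$ by $r_v((T\setminus u)\setminus x)$, and then recognizing the residual sum as first-step analysis on $T\setminus u$ via $d_x(T)=d_x(T\setminus u)$. The only difference is that you make the identity $d_x(T)=d_x(T\setminus u)$ explicit, whereas the paper states it in passing.
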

\begin{proof}
We prove the statement by induction on $|T|$. If $|T| = 2$, then clearly $r_v(T) = r_v(T \setminus u) = 1$. Suppose it holds on all tournaments $T'$ such that $|T'| < |T| = n$ and we will prove it for $T$. By 
first-step analysis (Observation \ref{obs_fsa}) we have that 
$$r_{v}(T) = \frac{1}{\binom{n}{2}}\sum_{e} r_{v}(T|_e) = \frac{1}{\binom{n}{2}} \sum_{x \neq v} d_x r_{v}(T \setminus x)$$
where team $x$ loses $d_x$ matches in $T$. By the inductive hypothesis we have that 
$r_{v}(T \setminus x) = r_{v}(T \setminus \{x,u\})$
for $x \neq u,v$ and $d_u = n-1$. Thus, 
\begin{align*}
    r_{v}(T) &= \frac{1}{\binom{n}{2}} \sum_{x \neq v} d_x r_{v}(T \setminus x) =\\
    &= \frac{1}{\binom{n}{2}}(\sum_{x \notin \{u,v\}} d_x r_{v}(T \setminus \{x,u\}) + (n-1)r_{v}(T \setminus u)) = \\
    &= \frac{1}{\binom{n}{2}}(\binom{n-1}{2}r_{v}(T \setminus u) + (n-1)r_{v}(T \setminus u)) =r_{v}(T \setminus u) \\
\end{align*}
where in the second to last line we used $\sum_{x \notin \{u,v\}} d_x r_{v}(T \setminus \{x,u\}) = \binom{n-1}{2}r_{v}(T \setminus u)$, which follows from first-step analysis (Observation \ref{obs_fsa}) and because $x$ loses $d_x$ matches in $T \setminus u$ as $u$ loses to every team in $T$
\end{proof}\\

As a natural consequence of Lemma \ref{lemma1} we show that the most manipulable tournament on $n+1$ teams is at least as manipulable as the most manipulable tournament on $n$ teams.
\begin{lemma}\label{lemma2}
$\alpha_{k,n}^r \leq \alpha_{k,n+1}^r$
\end{lemma}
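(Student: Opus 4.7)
The plan is to prove monotonicity of $\alpha_{k,n}^r$ in $n$ by a direct lifting argument: given any witness tournament on $n$ teams, construct an $(n+1)$-team tournament that is at least as manipulable by the same set of $k$ teams. The natural construction, motivated precisely by Lemma \ref{lemma1}, is to append a single ``dummy'' team $u$ that loses every match.

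Concretely, I would fix any $T$ on $n$ teams and any manipulating set $S \subseteq T$ with $|S| = k$, and let $T^*$ be obtained from $T$ by adding a new team $u \notin S$ that loses to every team of $T$. For any $S$-adjacent $T'$ of $T$, let $T'^*$ likewise add $u$ as a universal loser. Then $T^*$ and $T'^*$ differ only on matches within $S$ (since edges incident to $u$ are identical in both), so $T'^*$ is $S$-adjacent to $T^*$. By Lemma \ref{lemma1} applied to each $v \in S$, we have $r_v(T^*) = r_v(T)$ and $r_v(T'^*) = r_v(T')$; summing over $v \in S$ gives $r_S(T^*) = r_S(T)$ and $r_S(T'^*) = r_S(T')$.

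Consequently $r_S(T'^*) - r_S(T^*) = r_S(T') - r_S(T)$ for every $S$-adjacent $T'$, and maximizing over $T'$ yields $\alpha_S^r(T^*) \geq \alpha_S^r(T)$. Since $T^*$ has $n+1$ teams and $|S| = k$, this gives $\alpha_{k,n+1}^r \geq \alpha_S^r(T)$; taking the supremum over all $(T,S)$ with $|T|=n,|S|=k$ completes the proof.

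There is no real obstacle here: the argument is essentially immediate once one realizes that the right lifting is to add a universally dominated team, and the only fact being used is Lemma \ref{lemma1}, which guarantees that this addition preserves the entire winning distribution among the original teams (and hence commutes with the manipulation $T \mapsto T'$).
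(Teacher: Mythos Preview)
Your proposal is correct and is essentially identical to the paper's own proof: both add a universally-losing dummy team $u$ and invoke Lemma~\ref{lemma1} to conclude that $r_S$ is unchanged on both $T$ and the $S$-adjacent $T'$, yielding $\alpha_{k,n+1}^r \geq \alpha_{k,n}^r$. The only cosmetic difference is that the paper directly picks an optimizing pair $(T,S,T')$ attaining $\alpha_{k,n}^r$ (which exists since the supremum is over finitely many tournaments), whereas you phrase it as a supremum argument; the content is the same.
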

\begin{proof} See Appendix~\ref{appendix_sec3} for a proof. 
\end{proof}\\

 We now show another natural property of RDM, which is a generalization of Condorcet-consistent (Definition \ref{condorcetconsistent}), namely that if a group of teams $S$ wins all its matches against the rest of teams, then a team from $S$ will always win. 
\begin{lemma}\label{lemma3}
Let $T$ be a tournament and $S \subseteq T$ a group of teams such that every team in $S$ beats every team in $T \setminus S$. Then, $r_{S}(T) = 1$.
\end{lemma}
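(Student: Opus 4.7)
The plan is to prove the lemma by induction on $n = |T|$, using the first-step analysis (Observation~\ref{obs_fsa}) as the main tool. The key structural observation is that the dominance property ``every team in $S$ beats every team in $T \setminus S$'' is preserved under elimination of any single team by any legal match outcome in RDM, so the inductive hypothesis always applies to the sub-tournament after the first match.

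For the base case $n=1$, either $S = T$ and $r_S(T) = 1$ trivially, or else the hypothesis is vacuous. For the inductive step, I would fix $T$ with $|T| = n \geq 2$ and consider the first match $e$ played in RDM, which is uniform over the $\binom{n}{2}$ matches. I would then split into three cases based on $e$: (i) if $e$ is between two teams in $S$, the loser $x$ lies in $S$, and $S|_e = S \setminus \{x\}$ is nonempty (it contains the winner of $e$) and still beats every team in $T|_e \setminus S|_e = T \setminus S$; (ii) if $e$ is between $s \in S$ and $t \in T \setminus S$, the dominance assumption forces $s$ to beat $t$, so the loser is $t$ and $S|_e = S$ still dominates $T|_e \setminus S = (T\setminus S) \setminus \{t\}$; (iii) if $e$ is between two teams in $T \setminus S$, the loser lies in $T \setminus S$ and $S|_e = S$ continues to dominate. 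In every case $S|_e \subseteq T|_e$ satisfies the hypotheses of the lemma on a smaller tournament, so the inductive hypothesis gives $r_{S|_e}(T|_e) = 1$.

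Applying Observation~\ref{obs_fsa} then yields
\[
r_S(T) \;=\; \frac{1}{\binom{n}{2}} \sum_e r_{S|_e}(T|_e) \;=\; \frac{1}{\binom{n}{2}} \sum_e 1 \;=\; 1,
\]
which completes the induction. There is no real obstacle here; the only point that requires a moment of care is ruling out that $S|_e$ becomes empty, which can only happen if $|S| = 1$ and the singleton team $u \in S$ loses match $e$. But by dominance $u$ beats every other team in $T$, so $u$ cannot be the loser of any match, and the induction goes through cleanly without needing to invoke Condorcet-consistency separately.
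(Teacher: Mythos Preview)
Your proof is correct but takes a different route from the paper's. The paper gives a two-line argument by contradiction on executions of RDM: if some team outside $S$ were to win, then at some round the last surviving member of $S$ would be eliminated by a team outside $S$, contradicting the dominance assumption. You instead induct on $|T|$ via first-step analysis (Observation~\ref{obs_fsa}), verifying that the dominance hypothesis is preserved after the loser of any first match is removed. Both arguments are straightforward; the paper's is shorter and appeals directly to the operational description of RDM, while yours stays within the recursive framework that drives the paper's main results. One minor wording issue in your base case: the phrase ``or else the hypothesis is vacuous'' is slightly off, since for $S=\emptyset$ the hypothesis is vacuously \emph{true} while the conclusion fails --- it is cleaner simply to note that the lemma implicitly assumes $S \neq \emptyset$.
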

\begin{proof} See Appendix~\ref{appendix_sec3} for a proof. 
\end{proof}\\

As a result of Lemma \ref{lemma3} RDM is Condorcet-Consistent. As expected, RDM is also monotone (See Definition \ref{monotonedef}). 
\begin{lemma}\label{rdm_monotone}
RDM is monotone.
\end{lemma}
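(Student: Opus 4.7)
The plan is to prove this by induction on the number of teams $n = |T|$, using the first-step analysis of Observation~\ref{obs_fsa}. The base case $n = 2$ is immediate: the only match is $\{u,v\}$, so $r_u(T) = 1$ and $r_u(T') = 0$. For the inductive step, let $T$ and $T'$ be $\{u,v\}$-adjacent with $u$ beating $v$ in $T$. By first-step analysis,
\[
r_u(T) = \frac{1}{\binom{n}{2}} \sum_e r_u(T|_e), \qquad r_u(T') = \frac{1}{\binom{n}{2}} \sum_e r_u(T'|_e),
\]
where the sums range over the same set of edges $e$ (as ordered pairs, the only edge that is reoriented between $T$ and $T'$ is $\{u,v\}$). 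The goal is to show $r_u(T|_e) \geq r_u(T'|_e)$ term by term.

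I would split the edge $e$ played in the first round into four cases and verify each using the inductive hypothesis. (i) If $e \cap \{u,v\} = \emptyset$, the match has the same winner in $T$ and $T'$, and the resulting $(n-1)$-team tournaments $T|_e$ and $T'|_e$ still contain both $u$ and $v$ and remain $\{u,v\}$-adjacent; induction gives $r_u(T|_e) \geq r_u(T'|_e)$. (ii) If $e = \{u,x\}$ with $x \neq v$, the outcome is the same in $T$ and $T'$; when $u$ wins, the remaining tournaments are $\{u,v\}$-adjacent and induction applies, and when $u$ loses, both sides equal $0$. (iii) If $e = \{v,x\}$ with $x \neq u$, when $v$ wins the two reduced tournaments are $\{u,v\}$-adjacent and induction applies, and when $v$ loses $T|_e$ and $T'|_e$ are literally the same tournament (since $v$ has been eliminated and $\{u,v\}$ was the only differing match), so the two sides are equal. (iv) If $e = \{u,v\}$, then in $T$ team $u$ wins and $r_u(T|_e) = r_u(T \setminus v) \geq 0$, while in $T'$ team $u$ loses and $r_u(T'|_e) = 0$; so $r_u(T|_e) \geq r_u(T'|_e)$. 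Summing the termwise inequality over all $e$ yields $r_u(T) \geq r_u(T')$.

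There is no real obstacle here beyond keeping the case analysis straight; the only case where strict inequality can arise is case (iv), and that is also the only case where the identity of the surviving tournament genuinely differs between $T$ and $T'$. The other three cases are handled uniformly by the inductive hypothesis, making this essentially a routine coupling argument along the first-step decomposition.
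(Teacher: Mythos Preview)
Your proof is correct. The four-case split under first-step analysis is clean, and each case goes through as you describe (with the implicit understanding that the trivial case $T=T'$ is handled automatically, since then every $r_u(T|_e)=r_u(T'|_e)$).

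The paper takes a different, shorter route: it argues directly at the level of full executions rather than inducting on~$n$. Concretely, if $v$ beats $u$ in $T'$, then any complete RDM execution in which $u$ wins under $T'$ can never have scheduled the match $\{u,v\}$ (otherwise $u$ would have been eliminated). Hence that same sequence of matches is a valid execution of RDM on $T$ with the same winner $u$, giving an injection from $u$-winning executions in $T'$ to $u$-winning executions in $T$, and therefore $r_u(T')\le r_u(T)$. This is a global coupling in one shot, whereas your argument is the same coupling unrolled one round at a time via the inductive hypothesis. The paper's version is more concise and avoids the case analysis; your version has the virtue of staying entirely within the first-step-analysis framework used elsewhere in the paper, and makes explicit exactly where the inequality can be strict (only in case~(iv)).
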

\begin{proof} See Appendix~\ref{appendix_sec3} for a proof. 
\end{proof}\\

 Lemma \ref{lemma1} tells us that adding a team which loses to all other teams does not change the probability distribution of the other teams winning. Lemmas \ref{lemma1}, \ref{lemma2}, \ref{lemma3}, \ref{rdm_monotone} will be useful in our later analysis in Sections \ref{rdm3} and \ref{copiessection}. Now we consider a few examples of tournaments and illustrate the use of first-step analysis (Observation \ref{obs_fsa}) to compute the probability distribution of the winner in them.
\begin{enumerate}
    \item Let $T = \{a,b,c\}$, where $a$ beats $b$, $b$ beats $c$ and $c$ beats $a$. By symmetry of RDM, we have $r_a(T) = r_b(T) = r_c(T) = \frac{1}{3}$.
    \item Let $T = \{a,b,c\}$ where $a$ beats $b$ and $c$. Then clearly, $r_a(T) = 1$ and 
    $r_b(T) =r_c(T) = 0$.
    \item By Lemma \ref{lemma1}, it follows that the only tournament on 4 teams whose probability distribution cannot be reduced to a distribution on 3 teams is the following one $T = \{a_1,a_2,a_3,a_4\}$, where $a_i$ beats $a_{i+1}$ for $i = 1,2,3$, $a_4$ beats $a_1$, $a_1$ beats $a_3$ and $a_2$ beats $a_4$. By using what we computed in (1) and (2) combined with Lemma \ref{lemma1} we get by first step analysis
    \begin{align*}
    r_{a_1}(T) &= \frac{1}{6}(r_{a_1}(T \setminus a_2) +2 r_{a_1}(T \setminus a_3)+2 r_{a_1}(T \setminus a_4)) = \frac{1}{6}(\frac{1}{3} + \frac{2}{3} + 2) = \frac{1}{2}\\
    r_{a_2}(T) &= \frac{1}{6}(r_{a_2}(T \setminus a_1) +2 r_{a_2}(T \setminus a_3)+2 r_{a_2}(T \setminus a_4)) =\frac{1}{6}(1 +\frac{2}{3}) = \frac{5}{18}\\
    r_{a_3}(T) &= \frac{1}{6}(r_{a_3}(T \setminus a_1) +r_{a_3}(T \setminus a_2)+2 r_{a_3}(T \setminus a_4)) =\frac{1}{6}(\frac{1}{3}) = \frac{1}{18}\\
    r_{a_4}(T) &= \frac{1}{6}(r_{a_4}(T \setminus a_1) +r_{a_4}(T \setminus a_2)+2 r_{a_4}(T \setminus a_3)) =\frac{1}{6}(\frac{1}{3} + \frac{2}{3}) = \frac{1}{6}
    \end{align*}
\end{enumerate} 

The above examples are really important in our analysis because: a) we will use them in later for our lower bound example in Section \ref{lb}, and b) they are a short illustration of first-step analysis. \\
\indent In the following subsection, we review prior results for 2-team manipulations in RDM, which will also be useful for our treatment of the main result in Section \ref{rdm3}.

\subsection{Recap: Tight Bounds on 2-Team Manipulations in RDM}
\cite{dw21} (Theorem 5.2) proves that RDM is 2-SNM-$\frac{1}{3}$ and that this bound is tight, namely $ \alpha_2^{RDM} = \frac{1}{3}$. We will rely on this result in  Section \ref{rdm3}. 
\begin{theorem}\label{theorem1} (Theorem 5.2 in \cite{dw21})
$\alpha_2^{RDM} = \frac{1}{3}$
\end{theorem}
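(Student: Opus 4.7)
The plan is to prove the two inequalities $\alpha_2^{RDM} \geq \frac{1}{3}$ and $\alpha_2^{RDM} \leq \frac{1}{3}$ separately.

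For the lower bound, I would exhibit the $3$-team cycle $T = \{a,b,c\}$ in which $a$ beats $b$, $b$ beats $c$, and $c$ beats $a$. By the symmetry of RDM noted in Example~1 of Section~\ref{examples}, $r_b(T) = r_c(T) = \frac{1}{3}$, so $r_{\{b,c\}}(T) = \frac{2}{3}$. Now let $T'$ be the $\{b,c\}$-adjacent tournament in which $c$ beats $b$ (with every other match identical). Then $c$ beats both $a$ and $b$ and is a Condorcet winner of $T'$, so by Condorcet-consistency $r_c(T') = 1$ and hence $r_{\{b,c\}}(T') = 1$. The manipulators gain exactly $\frac{1}{3}$, witnessing $\alpha_2^{RDM} \geq \frac{1}{3}$.

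For the upper bound, I would induct on $n = |T|$, with the trivial base case $n = 2$, and employ first-step analysis (Observation~\ref{obs_fsa}) together with a coupling argument, in the style described in the Technical Highlight paragraph of the introduction. Fix $S = \{u,v\}$, a tournament $T$, and an $S$-adjacent $T'$; WLOG assume $u$ beats $v$ in $T$ while $v$ beats $u$ in $T'$. Decompose both $r_S(T)$ and $r_S(T')$ via Observation~\ref{obs_fsa} according to the first match played. For every first match that does not involve $u$ or $v$, the resulting sub-tournaments $T \setminus x$ and $T' \setminus x$ remain $S$-adjacent and the inductive hypothesis bounds the contribution to the gain by $\frac{1}{3}$. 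The genuinely new contributions come from first matches that directly involve the $\{u,v\}$ edge, or that eliminate exactly one of $u,v$; here I would couple every first-step outcome under which manipulation strictly helps $S$ with two first-step outcomes under which manipulation is neutral or harmful. Because RDM picks the first match uniformly at random, this $1{:}2$ ratio yields the $\frac{1}{3}$ bound after averaging, with monotonicity (Lemma~\ref{rdm_monotone}) used to sign the contributions in each coupled triple.

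The main obstacle is constructing this coupling precisely: identifying, for every first-match outcome that leaves room for manipulation to help $S$, the exact two first-match outcomes whose losses (or neutrality) compensate, and verifying the cancellation term-by-term using the induction hypothesis. This is the technique already developed in Theorem~5.2 of~\cite{dw21}, which I would invoke directly for the upper bound; the independent content of my proposal is the explicit $3$-cycle lower bound above, which exactly matches it.
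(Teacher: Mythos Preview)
The paper does not actually prove this theorem: it is quoted verbatim as Theorem~5.2 of~\cite{dw21} and used as a black box (the companion lower bound for general Condorcet-consistent rules is likewise imported as Theorem~\ref{theoremssw17}). Your proposal does the same thing for the upper bound, explicitly deferring to~\cite{dw21}, so on that front you and the paper are aligned. Your explicit $3$-cycle lower-bound example is correct and is the standard witness; it is in fact slightly more than the present paper supplies, since here the entire statement is cited rather than argued.

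One caution about the sketch itself: the coupling you describe (``pair every helpful first match with two neutral/harmful ones'') is the right spirit, but as written it is not yet a proof. The genuine argument in~\cite{dw21} does not literally construct such $1{:}2$ triples of first matches; rather, after the first-step decomposition one must track how eliminating a team outside $\{u,v\}$ interacts with the single flipped edge and control the residual terms carefully. Since you ultimately invoke~\cite{dw21} for the upper bound this is not a defect in your proposal, but if you intended the sketch to stand on its own you would need to make the coupling precise.
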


\cite{ssw17} (Theorem 3.1) proves that the bound of $\frac{1}{3}$ is the best one can hope to achieve for a Condorcet-consistent rule.

\begin{theorem}\label{theoremssw17} (Theorem 3.1 in \cite{ssw17})
There is no Condorcet-consistent tournament rule on $n$ players (for $n \geq 3$) that is 2-SNM-$\alpha$ for $\alpha < \frac{1}{3}$
\end{theorem}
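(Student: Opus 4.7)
The plan is to exhibit, for every $n\ge 3$, a single tournament $T$ on $n$ teams on which any Condorcet-consistent rule $r$ admits a $2$-team manipulation of size at least $\tfrac{1}{3}$; this immediately yields $\alpha_2^r \ge \tfrac{1}{3}$, contradicting any claim that $r$ is $2$-SNM-$\alpha$ for $\alpha<\tfrac{1}{3}$.

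First I would build $T$ as follows. Pick three distinguished teams $a,b,c$ and put them in a $3$-cycle: $a$ beats $b$, $b$ beats $c$, and $c$ beats $a$. Let the remaining $n-3$ teams $d_1,\ldots,d_{n-3}$ each lose to all of $a,b,c$, with matches among the $d_i$'s set arbitrarily. The point of dominating all "extra" teams by $\{a,b,c\}$ is that whenever we flip any one edge of the $3$-cycle, exactly one of $a,b,c$ becomes a Condorcet winner of the resulting tournament, so Condorcet-consistency pins its winning probability down to $1$. Write $p_a := r_a(T)$, $p_b := r_b(T)$, $p_c := r_c(T)$; since $r(T)$ is a distribution over all $n$ teams, $p_a+p_b+p_c\le 1$.

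Next I would apply each of the three single-edge flips and compute the corresponding gain for the $2$-team coalition:
\begin{itemize}
\item Flipping the $\{a,b\}$ match to make $b$ beat $a$ makes $b$ a Condorcet winner, so the coalition $\{a,b\}$ gains $1-(p_a+p_b)$.
\item Flipping $\{b,c\}$ to make $c$ beat $b$ makes $c$ a Condorcet winner, so $\{b,c\}$ gains $1-(p_b+p_c)$.
\item Flipping $\{a,c\}$ to make $a$ beat $c$ makes $a$ a Condorcet winner, so $\{a,c\}$ gains $1-(p_a+p_c)$.
\end{itemize}

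Finally, averaging the three gains yields $\tfrac{1}{3}\bigl(3 - 2(p_a+p_b+p_c)\bigr) \ge \tfrac{1}{3}(3-2) = \tfrac{1}{3}$, so at least one of the three manipulations achieves gain $\ge \tfrac{1}{3}$, proving $\alpha_2^r \ge \tfrac{1}{3}$. There is no real obstacle here; the only thing to get right is the choice of $T$ so that each flip produces a Condorcet winner, which is exactly why the extra teams are forced to lose to all of $a,b,c$ (note that we cannot invoke Lemma~\ref{lemma1}, which is RDM-specific, to reduce to $n=3$, so handling general $n$ via this dominating construction is essential).
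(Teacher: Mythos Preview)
Your argument is correct: the $3$-cycle with $n-3$ dominated dummy teams, the three single-edge flips each producing a Condorcet winner, and the averaging inequality $\tfrac{1}{3}\bigl(3-2(p_a+p_b+p_c)\bigr)\ge \tfrac{1}{3}$ together give $\alpha_2^r\ge \tfrac{1}{3}$ for any Condorcet-consistent $r$. Note that this paper does not actually prove the statement---it is quoted from \cite{ssw17} without proof---and your proof is essentially the standard one given there, so there is nothing to contrast.
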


We prove the following useful corollary, which will be useful in Section \ref{rdm3}.

\begin{corollary}\label{corollary}
Let $T$ be a tournament and $u,v \in T$ two teams such that there is at most one match in which a team in $\{u,v\}$ loses to a team in $T \setminus \{u,v\}$. Then
$$r_{u,v}(T) \geq \frac{2}{3}$$
\end{corollary}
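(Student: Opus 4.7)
The plan is to prove the corollary by induction on $n = |T|$, with the trivial base case $n = 2$ (where $r_{u,v}(T) = 1$). The hypothesis cleanly splits into two cases: either no team in $\{u,v\}$ loses to a team outside, or exactly one such loss occurs.

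The zero-loss case is immediate: both $u$ and $v$ beat every team in $T \setminus \{u,v\}$, so Lemma~\ref{lemma3} applied with $S = \{u,v\}$ yields $r_{u,v}(T) = 1 \geq 2/3$.

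For the one-loss case, I would take WLOG that $u$ loses to a unique $w \in T \setminus \{u,v\}$, so $d_u = 1$ and $d_v = 0$ in $T$. The main tool is first-step analysis (Observation~\ref{obs_fsa}), which expresses $r_{u,v}(T)$ as a weighted average of $r_{\{u,v\} \setminus x}(T \setminus x)$ over teams $x$, with weight $d_x$. For each $x \notin \{u,v\}$, the sub-tournament $T \setminus x$ still satisfies the corollary's hypothesis (removing $x \neq w$ preserves the single loss; removing $x = w$ leaves zero losses), so the inductive hypothesis gives $r_{u,v}(T \setminus x) \geq 2/3$. For the $x = u$ contribution, $T \setminus u$ has $v$ as a Condorcet winner (it beats everyone outside $\{u,v\}$, including $w$), so Lemma~\ref{lemma3} gives $r_v(T \setminus u) = 1$. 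The $x = v$ contribution vanishes since $d_v = 0$. Plugging in the weights (which sum to $\binom{n}{2} - 1$ over $x \notin \{u,v\}$), the weighted average evaluates to
$$r_{u,v}(T) \geq \frac{(2/3)(\binom{n}{2} - 1) + 1}{\binom{n}{2}} = \frac{2}{3} + \frac{1/3}{\binom{n}{2}} \geq \frac{2}{3}.$$

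The only delicate check is that the inductive hypothesis transfers cleanly to each $T \setminus x$ with $x \notin \{u,v\}$, which holds because removing a team outside $\{u,v\}$ can only weakly decrease the number of $\{u,v\}$-to-outside losses. I do not foresee a major obstacle; the calculation even leaves a slack of order $1/\binom{n}{2}$, which suggests the induction is robust. Notably, Theorem~\ref{theorem1} is not required for this corollary: the argument uses only Condorcet-consistence (via Lemma~\ref{lemma3}) and first-step analysis (Observation~\ref{obs_fsa}).
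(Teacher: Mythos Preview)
Your inductive approach has a genuine gap in the one-loss case. You write ``$d_u = 1$ and $d_v = 0$,'' but $d_x$ in Observation~\ref{obs_fsa} counts \emph{all} losses of $x$, including the $u$--$v$ match. The assertion $d_v = 0$ forces $v$ to beat $u$ as well as every team outside $\{u,v\}$, making $v$ a Condorcet winner; this is a trivial subcase already covered by Lemma~\ref{lemma3}. The substantive case is $u$ beats $v$, giving $d_u = d_v = 1$. Then the $x = v$ term in first-step analysis contributes $r_u(T \setminus v)$, and this can be $0$: in the $3$-cycle $u \to v \to w \to u$ one has $T \setminus v = \{u,w\}$ with $w$ beating $u$. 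Your displayed inequality therefore does not hold as stated (the weights over $x \notin \{u,v\}$ sum to $\binom{n}{2} - 2$, not $\binom{n}{2} - 1$, and the missing term is not bounded below by $2/3$).

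The induction is salvageable: in the case $u$ beats $v$, isolate the $x = w$ term. Removing $w$ leaves zero outside losses, so $r_{u,v}(T \setminus w) = 1$ by Lemma~\ref{lemma3}, and $d_w \geq 1$ since $v$ beats $w$. Bounding $r_u(T \setminus v) \geq 0$, $r_v(T \setminus u) = 1$, and $r_{u,v}(T \setminus x) \geq 2/3$ for $x \notin \{u,v,w\}$ by induction, one gets
\[
r_{u,v}(T) \;\geq\; \frac{1}{\binom{n}{2}}\Bigl[\,1 + 0 + d_w + \tfrac{2}{3}\bigl(\tbinom{n}{2} - 2 - d_w\bigr)\Bigr] \;=\; \frac{2}{3} + \frac{d_w - 1}{3\binom{n}{2}} \;\geq\; \frac{2}{3}.
\]
With this repair your route is genuinely different from the paper's: the paper deduces the corollary in one line from Theorem~\ref{theorem1} (flip the $u$--$v$ match to make $v$ a Condorcet winner and invoke $2$-SNM-$1/3$), whereas your argument is self-contained and avoids Theorem~\ref{theorem1} entirely. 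That is a nice feature, but the case split on the $u$--$v$ match and the separate treatment of $x = w$ are essential and currently missing.
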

\begin{proof}
If $u$ and $v$ beat every team in $T \setminus \{u,v\}$, then by Lemma \ref{lemma3}, $r_{u,v}(T) = 1 \geq \frac{2}{3}$. WLOG suppose that there is some team $t$ which beats $u$, loses to $v$ and all other teams lose to both $u$ and $v$. Let $T'$ be $\{u,v\}$-adjacent to $T$ such that $v$ is a Condorcet winner in $T'$. Clearly we have $r_{u,v}(T') = 1$ as RDM is Condorcet-Consistent. By Theorem \ref{theorem1} we have $r_{u,v}(T') -r_{u,v}(T) \leq \frac{1}{3}$. This, implies that $r_{u,v}(T) \geq \frac{2}{3}$ as desired.
\end{proof}

\section{Main Result: $\alpha^{RDM}_3 = 31/60$}\label{rdm3}
The goal of this section is to prove that no 3 teams can improve their probability of winning by more than $\frac{31}{60}$ and that this bound is tight. We prove the following theorem
\begin{theorem}\label{theorem2}
$\alpha_3^{RDM} = \frac{31}{60}$
\end{theorem}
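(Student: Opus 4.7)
The plan is to establish matching lower and upper bounds. For the lower bound, exhibit an explicit 5-team tournament $T$ and 3-manipulator set $S$ and compute, via first-step analysis (Observation~\ref{obs_fsa}) together with the probabilities worked out in Section~\ref{examples} for tournaments of size up to four, that some $S$-adjacent $T'$ achieves exactly $r_S(T') - r_S(T) = 31/60$. By Lemma~\ref{lemma2}, this alone yields $\alpha_3^{RDM} \geq 31/60$.

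For the upper bound, the plan is to induct on $n$ to show $\alpha_{3,n}^{RDM} \leq 31/60$. The base cases (small $n$) are handled by direct computation; Lemma~\ref{lemma1} cuts down the set of essentially distinct small tournaments one must check, since teams that lose to everyone can be removed without changing any probability. For the inductive step, fix any $n$-team tournament $T$, manipulator set $S = \{u,v,w\}$, and $S$-adjacent tournament $T'$. Applying first-step analysis to both,
$$r_S(T') - r_S(T) = \frac{1}{\binom{n}{2}} \sum_{e} \bigl(r_{S|_e}(T'|_e) - r_{S|_e}(T|_e)\bigr),$$
and I would partition the edges $e$ by the size of $e \cap S$.

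Three of the four edge categories admit clean bounds. When $|e \cap S| = 0$, or when $|e \cap S| = 1$ and the $S$-team wins its match, $T|_e$ and $T'|_e$ are $S$-adjacent on $n{-}1$ teams, so the inductive hypothesis gives each difference at most $31/60$. When $|e \cap S| = 1$ and the $S$-team loses, the remaining pair $S|_e$ has size $2$ and Theorem~\ref{theorem1} bounds the difference by $1/3 < 31/60$. The delicate category is the three within-$S$ edges, since the loser of an internal match may differ between $T$ and $T'$, so the two contributions $r_{\{v,w\}}(T \setminus u)$ and $r_{\{u,w\}}(T' \setminus v)$ live on different vertex sets and cannot be compared by a single 2-team bound.

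The main obstacle is therefore the within-$S$ contribution. The plan is to couple these three terms by walking between $T$ and $T'$ through a short sequence of intermediate tournaments, applying the 2-team bound of Theorem~\ref{theorem1} along each $\{x,y\}$-adjacent step and invoking Corollary~\ref{corollary} to sharpen the bound whenever one of the remaining $S$-teams has at most one outside loss. Carefully combining the four categories then shows that the average per-edge increment is at most $31/60$, with the budget tight only when the tournament structure matches the lower-bound example, so equality is achieved at $n = 5$ and $\alpha_{3,n}^{RDM} = 31/60$ for all $n \geq 5$.
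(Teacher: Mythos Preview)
Your lower-bound plan and the overall first-step decomposition match the paper. The gap is in the upper bound, specifically in the sentence ``Carefully combining the four categories then shows that the average per-edge increment is at most $31/60$.'' If you bound the $\binom{n}{2}-|I|-3$ inductive edges by $31/60$ each, the $|I|$ edges where an $S$-team loses externally by $1/3$ each, and the three within-$S$ edges by some total $A$, then you need
\[
A + \frac{|I|}{3} \;\leq\; (|I|+3)\cdot\frac{31}{60},
\qquad\text{i.e.}\qquad
A \;\leq\; \frac{11|I|+93}{60}.
\]
The best uniform bound your coupling/intermediate-tournament idea will give is $A\le 7/3$ (this is exactly what the paper proves as Lemma~\ref{lemmaA}), and Corollary~\ref{corollary} sharpens it to $A\le 1$ only when $|I|\le 1$. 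But $7/3$ exceeds the threshold above for $|I|\in\{2,3,4\}$, so the straight averaging argument fails precisely in the range that contains the worst case (the tight example has $|I|=4$).

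The paper closes this gap not by improving the bound on $A$, but by refining the induction: instead of tracking only $|I|$, it tracks the triple $(a_1,a_2,a_3)$ counting how many non-$S$ teams beat exactly $1$, $2$, or $3$ members of $S$, and proves a table of sharper bounds $M_n(a_1,a_2,a_3)\le f(a_1,a_2,a_3)$ for every configuration with $|I|\le 4$ (Lemma~\ref{lemmaI4}, Table~\ref{tab:upp_bounds}). The point is that many of the inductive edges land in strictly easier configurations (e.g.\ eliminating a team in $L_2$ drops you from $(0,2,0)$ to $(0,1,0)$, where the bound is $1/2<31/60$), and harvesting that slack is what makes the arithmetic close. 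Your proposal has no analogue of this refined case analysis, and without it the argument does not reach $31/60$.
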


Our proof consists of two parts: 
\begin{itemize}
    \item Lower bound: $\alpha_3^{RDM} \geq \frac{31}{60}$, for which we provide a tournament $T$ and a set $S$ of size 3, which can manipulate to increase their probability by $\frac{31}{60}$ 
    \item Upper bound: $\alpha_3^{RDM} \leq \frac{31}{60}$, for which we provide a proof that for any tournament $T$ no set $S$ of size 3 can increase their probability of winning by more than $\frac{31}{60}$, i.e. RDM is 3-SNM-$\frac{31}{60}$
\end{itemize}

\subsection{Lower Bound}\label{lb} Let $r$ denote RDM. Denote by $B_x$ the set of teams which team $x$ beats. Consider the following tournament $T = \{u,v,w,a,b\}$ (shown in Fig \ref{fig:example_graph}):
$$B_{a} = \{u,v,b\}, B_b = \{u,v\}, B_u = \{v,w\}, B_v = \{w\}, B_{w} = \{a,b\}$$
The tournament is also shown in Figure \ref{fig:example_graph}. 
Let $S = \{u,v,w\}$. By first-step analysis (Observation \ref{obs_fsa}) and by using our knowledge in Section \ref{examples} for tournaments on 4 teams we can write 
\begin{align*}
    r_{u,v,w}(T) &= \frac{1}{10}(3 r_{u,w}(T \setminus v) +2 r_{u,v}(T \setminus w) + 2 r_{u,v,w}(T \setminus b) + r_{u,v,w}(T \setminus a) + 2 r_{v,w}(T \setminus u)) = \\
    &= \frac{1}{10} (3\times(\frac{1}{2}+\frac{1}{6}) + 2 \times 0 + 2 \times (\frac{5}{18}+\frac{1}{18}+\frac{1}{6}) + (\frac{5}{18}+\frac{1}{18}+\frac{1}{6}) + 2 \times (\frac{1}{2}+\frac{1}{6}))= \\
    &= \frac{1}{10}(2+1 + \frac{1}{2} + \frac{4}{3}) = \frac{29}{60}
\end{align*}
Now suppose that $u$ and $v$ throw their matches with $w$. i.e. $T'$ is $S$-adjacent to $T$, where in $T'$, $w$ beats $u$ and $v$ and all other matches have the same outcomes as in $T$. Then, since $w$ is a Condorcet winner, $r_{u,v,w}(T') = r_{w}(T') = 1$. Therefore, 
$$\alpha_3^{RDM} \geq r_{u,v,w}(T')-r_{u,v,w}(T) = 1-\frac{29}{60} = \frac{31}{60}$$
Thus, $\alpha_3^{RDM} \geq \frac{31}{60}$ as desired. 

\begin{figure}[htp]
    \centering
    \includegraphics[width=6cm]{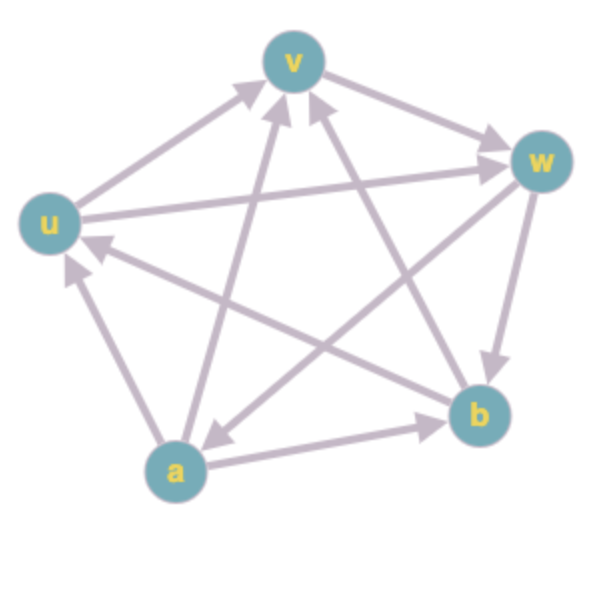}
    \caption{The tournament $T$ in which $S = \{u,v,w\}$ achieves a gain of $\frac{31}{60}$ by manipulation.}
    \label{fig:example_graph}
\end{figure}

\subsection{Upper Bound}
Suppose we have a tournament $T$ on $n \geq 3$ vertices and $S = \{u,v,w\}$ is a set of 3 (distinct) teams, where $S$ will be the set of manipulator teams. Let $I$ be the set of matches in which a team from $S$ loses to a team from $T \setminus S$. 
Our proof for $\alpha^{RDM}_k \leq \frac{31}{60}$ will use the following strategy 
\begin{itemize}
    \item In Sections \ref{FSA_framework} and \ref{bounds_on_abc} we introduce the first-step analysis framework by considering possible cases for the first played match. In each of these cases the loser of the match is eliminated and we are left with a tournament with one less team. We pair each match in $T$ with its corresponding match in $T'$ and we bound the gains of manipulation in each of the following cases separately (these correspond to each of the terms $A,B$, and $C$ respectively in the analysis in Section \ref{bounds_on_abc}).
        \begin{itemize}
            \item The first match is between two teams in $S$ (there are 3 such matches). 
            \item The first match is between a team in $S$ and a team in $T \setminus S$ and the team from $S$ loses in the match (there are $|I|$ such matches).
            \item The first match is any other match not covered by the above two cases
        \end{itemize}
    \item In Section \ref{caseI4} we prove that if $|I| \leq 4$, then $\alpha^{RDM}_{S}(T) \leq \frac{31}{60}$ (i.e. the set of manipulators cannot gain more than $\frac{31}{60}$ by manipulating). 
    \item In Section \ref{gen_upper_bound} we prove that if $T$ is the most manipulable tournament on $n$ vertices 
    (i.e. $\alpha^{RDM}_{S}(T) = \alpha_{3,n}^{RDM}$), then $\alpha^{RDM}_{S}(T) \leq \frac{|I| + 7}{3(|I| + 3)}$

    \item In Section \ref{final_proof} we combine the above facts to finish the proof of Theorem \ref{theorem2}
\end{itemize}

We first introduce some notation that we will use throughout this section. Suppose that $T'$ and $T$ are $S$-adjacent. Recall from Section \ref{examples} that for a match $e = (i,j)$, $T|_{e}$ is the tournament obtained after eliminating the loser in $e$. Also $d_x$ is the number of teams that a team $x$ loses to in $T$. For $x \in S$, let $\ell_x$ denote the number of matches $x$ loses against a team in $S$ when considered in $T$ and let $\ell'_x$ denote the number of matches that $x$ loses against a team in $S$ when considered in $T'$. Let $d^{*}_x$ denote the number of teams in $T \setminus S$ that $x$ loses to. Notice that since $T$ and $T'$ are $S$-adjacent, $x \in S$ loses to exactly $d^{*}_x$ teams in $T' \setminus S$ when considered in $T'$. Let $G = I \cup \{uv, vw, uw\}$ be the set of matches in which a team from $S$ loses.

\subsubsection{The First Step Analysis Framework}\label{FSA_framework}
Notice that in the first round of RDM, a uniformly random match $e$ from the $\binom{n}{2}$ matches is chosen. If $e \in G$ then we are left with $T \setminus x$ where $x$ loses in $e$ for some $x \in S$. If $e \not \in G$, we are left with $T|_{e}$ and all teams in $S$ are still in the tournament. For $x \in S$, there are $\ell_x$ matches in which they lose to a team from $S$ and $d^{*}_x$ matches in which they lose to a team from $T \setminus S$. By considering each of these cases and using first-step analysis (Observation \ref{obs_fsa}), we have
\begin{align*}
    r_{u,v,w}(T) &= \frac{1}{\binom{n}{2}} \Bigg[ \sum_{x \in \{u,v,w\}} \ell_x r_{\{u,v,w\} \setminus x}(T \setminus x) + d^{*}_u r_{v,w}(T \setminus u) + d^{*}_v r_{u,w}(T \setminus v) +d^{*}_w r_{v,u}(T \setminus w) \\&+ \sum_{e \notin G} r_{u,v,w}(T|_{e})  \Bigg]
\end{align*}
Since $T$ and $T'$ are $S$-adjacent each $x\in S$ loses to exactly $d^{*}_x$, teams form $T' \setminus S$, and we can similarly write
\begin{align*}
    r_{u,v,w}(T') &= \frac{1}{\binom{n}{2}}\Bigg[\sum_{x \in \{u,v,w\}} \ell'_x r_{\{u,v,w\} \setminus x}(T' \setminus x) + d^{*}_u r_{v,w}(T' \setminus u) + d^{*}_v r_{u,w}(T' \setminus v) + d^{*}_w r_{v,u}(T' \setminus w)\\
    &+\sum_{e \notin G} r_{u,v,w}(T'|_{e})\Bigg]\\
\end{align*}
By subtracting the above two expression we get 
\begin{align*}
    &r_{u,v,w}(T') - r_{u,v,w}(T) = \\
    &= \frac{1}{\binom{n}{2}} \Bigg[ \sum_{x \in \{u,v,w\}} \ell'_x r_{\{u,v,w\} \setminus x}(T' \setminus x) - \ell_x r_{\{u,v,w\}\setminus x}(T \setminus x)  + d^{*}_u(r_{\{v,w\}}(T' \setminus u)- r_{\{v,w\}}(T \setminus u)) \\
    &+ d^{*}_v(r_{\{u,w\}}(T' \setminus v)- r_{\{u,w\}}(T \setminus v)) +  d^{*}_w(r_{\{v,u\}}(T' \setminus w)- r_{\{v,u\}}(T \setminus w)) \\
    &+\sum_{e \notin G} r_{u,v,w}(T'|_{e})-r_{u,v,w}(T|_{e}) \Bigg]
\end{align*}
Thus,
\begin{equation}\label{fsa:1}
    r_{u,v,w}(T') - r_{u,v,w}(T) = \frac{1}{\binom{n}{2}}(A+B+C)
\end{equation}
where 
\begin{align*}
A &= \sum_{x \in \{u,v,w\}} \ell'_x r_{\{u,v,w\} \setminus x}(T' \setminus x) - \ell_x r_{\{u,v,w\}\setminus x}(T \setminus x)\\
B &= \sum_{x \in S} d^{*}_x (r_{\{u,v,w\} \setminus x}(T' \setminus x)- r_{\{u,v,w\} \setminus x}(T \setminus x))\\
C &= \sum_{e \notin G} r_{u,v,w}(T'|_{e})-r_{u,v,w}(T|_{e})
\end{align*}

\subsubsection{Upper Bounds on $A,B$ and $C$}\label{bounds_on_abc}
We now prove some bounds on the terms $A$, $B$ and $C$ (defined in Section \ref{FSA_framework}) which will be useful later. Recall that $I$ denotes the set of matches in which a team from $S$ loses from a team from $T \setminus S$. We begin with bounding  $A$ in the following lemma

\begin{lemma}\label{lemmaA}
For all $S$-adjacent $T$ and $T'$, we have $A \leq \frac{7}{3}$. Moreover, if $|I| \leq 1$, then $A \leq 1$.
\end{lemma}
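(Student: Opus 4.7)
My plan is a per-term analysis followed by a finite case enumeration. Set $p_x := r_{\{u,v,w\}\setminus x}(T\setminus x)$ and $p'_x := r_{\{u,v,w\}\setminus x}(T'\setminus x)$, so that $A = \sum_{x\in S}(\ell'_x p'_x - \ell_x p_x)$. The key observation is that, since $T$ and $T'$ agree outside $S$ and removing $x \in S$ leaves only the single $S$-match between the other two $S$-teams possibly distinct, the tournaments $T\setminus x$ and $T'\setminus x$ are $(\{u,v,w\}\setminus x)$-adjacent. Theorem~\ref{theorem1} therefore gives $p'_x - p_x \leq 1/3$, and combined with the trivial $p'_x \leq 1$ yields $p'_x \leq \min(1,\ p_x + 1/3)$.

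For the general bound $A\leq 7/3$, I will first compute the per-term maximum $f(a,b) := \sup\{b p' - a p : p\in[0,1],\ p'\leq \min(1,p+1/3)\}$ for each $(a,b)\in\{0,1,2\}^2$. A short calculation gives a $3\times 3$ table whose nonzero entries are $f(0,1)=1$, $f(0,2)=2$, $f(1,1)=1/3$, $f(1,2)=4/3$, $f(2,1)=1/3$, $f(2,2)=2/3$. Next I will use the structural constraint that $\sum_x \ell_x = \sum_x \ell'_x = 3$ with each $\ell_x,\ell'_x\in\{0,1,2\}$, which forces each of $(\ell_u,\ell_v,\ell_w)$ and $(\ell'_u,\ell'_v,\ell'_w)$ to be either the cycle $(1,1,1)$ or a permutation of $(0,1,2)$. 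Enumerating the joint cases and summing the relevant $f(\ell_x,\ell'_x)$ shows that the maximum is $7/3$, attained e.g.\ at $(\ell_u,\ell_v,\ell_w)=(0,1,2)$ and $(\ell'_u,\ell'_v,\ell'_w)=(1,2,0)$, which is exactly the configuration appearing in the lower-bound example of Section~\ref{lb}.

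For the sharper bound $A\leq 1$ when $|I|\leq 1$, I will instead lower-bound the $p_x,p'_x$ using Lemma~\ref{lemma3} and Corollary~\ref{corollary}. If $|I|=0$ then every $S$-team beats every non-$S$ team, so Lemma~\ref{lemma3} applied in $T\setminus x$ and $T'\setminus x$ gives $p_x=p'_x=1$ for every $x$, and hence $A=\sum_x(\ell'_x-\ell_x)=0$. If $|I|=1$, WLOG $w$ is the unique $S$-team with $d^*_w=1$, so $d^*_u=d^*_v=0$. Then for $x\in\{u,v\}$ the pair $S\setminus x$ has exactly one loss to $T\setminus S$, so Corollary~\ref{corollary} gives $p_x,p'_x \geq 2/3$; and for $x=w$ the pair $S\setminus w$ has no losses outside $S$, so Lemma~\ref{lemma3} gives $p_w=p'_w=1$. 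Substituting the upper bound $p'_x\leq 1$ and these lower bounds on $p_x$ into $A$, and using $\ell_u+\ell_v+\ell_w=3$, the expression collapses to $A \leq 3 - \tfrac{2}{3}(\ell_u+\ell_v) - \ell_w = 1 - \tfrac{1}{3}\ell_w \leq 1$.

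The main obstacle is the finite but mildly tedious enumeration behind the $7/3$ bound; once the per-term table $f(a,b)$ is in hand, the joint enumeration over the six nontrivial pairs of triples is quick. The fact that $7/3$ is attained exactly by the lower-bound configuration of Section~\ref{lb} is reassuring evidence that no slack is being left in the per-term inequalities.
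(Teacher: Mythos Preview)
Your proof is correct. Both arguments rest on the same key input (Theorem~\ref{theorem1} applied to the two remaining $S$-teams after deleting one), but they organize the bookkeeping differently.

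For the general bound $A\le 7/3$, the paper avoids your full enumeration via a single structural observation: since each of $(\ell_u,\ell_v,\ell_w)$ and $(\ell'_u,\ell'_v,\ell'_w)$ has at most one zero entry, there is an index $x^*$ with $\min(\ell_{x^*},\ell'_{x^*})\ge 1$. Peeling off one copy of the $x^*$-term and bounding it by $p'_{x^*}-p_{x^*}\le 1/3$, then bounding the remaining $\sum_x \ell'_x\,p'_x$ terms trivially by their coefficient sum $(\ell'_{x^*}-1)+\ell'_{y^*}+\ell'_{z^*}=2$, gives $A\le 1/3+2=7/3$ in a couple of lines. Your route---computing the per-term table $f(a,b)$ and summing over the finitely many admissible $(\ell,\ell')$ pairs---is more systematic and yields the same $7/3$; it also makes visible that the extremal configuration $(\ell,\ell')=((0,1,2),(1,2,0))$ is exactly the one from the lower-bound example, which the paper's shortcut hides.

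For the $|I|\le 1$ case your argument is essentially the paper's: both use Corollary~\ref{corollary} to get $p_x\ge 2/3$ and then compute $A\le 3-\tfrac{2}{3}\cdot 3=1$. Your extra refinement (noting $p_w=p'_w=1$ when $w$ is the sole team with an outside loss, hence $A\le 1-\tfrac{1}{3}\ell_w$) is a slight sharpening the paper does not bother with, but it is not needed downstream.
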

\begin{proof} See Appendix \ref{appendix_sec4} for a proof. 
\end{proof}\\

Next, we show the following bound on the term $B$. 
\begin{lemma}\label{lemmaB}
For all $S$-adjacent $T,T'$  we have 
$$B \leq \frac{d^{*}_u + d^{*}_v + d^{*}_w}{3} = \frac{|I|}{3}$$
Moreover, if $|I| \leq 1$, then $B = 0$
\end{lemma}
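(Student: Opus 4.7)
The plan is to reduce this to the known $k=2$ tight bound (Theorem \ref{theorem1}) applied separately to each elimination of a manipulator $x \in S$. The key observation is that if $T$ and $T'$ are $S$-adjacent, then for each $x \in S$, the tournaments $T \setminus x$ and $T' \setminus x$ are $(S \setminus x)$-adjacent, and $S \setminus x$ has exactly two elements. So the difference $r_{S \setminus x}(T' \setminus x) - r_{S \setminus x}(T \setminus x)$ is bounded above by $\alpha_2^{RDM} = 1/3$. Multiplying each such term by the nonnegative weight $d^*_x$ and summing over $x \in S$ yields
\[
B \;\le\; \tfrac{1}{3}\bigl(d^{*}_u + d^{*}_v + d^{*}_w\bigr).
\]
To identify this sum with $|I|$, I would simply note that $|I|$ counts the matches in which some team of $S$ loses to a team of $T \setminus S$, and by definition each $d^*_x$ counts exactly the losses of team $x$ to teams outside $S$, so $d^{*}_u + d^{*}_v + d^{*}_w = |I|$.

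For the ``moreover'' part, assume $|I| \leq 1$. If $|I| = 0$, every $d^*_x$ is $0$ and $B = 0$ trivially. If $|I| = 1$, exactly one of $d^*_u, d^*_v, d^*_w$ equals $1$ and the other two are $0$; by symmetry assume $d^*_u = 1$, so $B = r_{v,w}(T' \setminus u) - r_{v,w}(T \setminus u)$. Since $d^*_v = d^*_w = 0$, both $v$ and $w$ beat every team in $T \setminus S$. Hence in $T \setminus u$ the set $\{v,w\}$ beats every other team, so by Lemma \ref{lemma3} we have $r_{v,w}(T \setminus u) = 1$; the same reasoning applied to $T'$ gives $r_{v,w}(T' \setminus u) = 1$, and therefore $B = 0$.

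The lemma is essentially a direct corollary of the $k=2$ tight bound together with the group Condorcet property, so I do not expect any technical obstacle; the only thing worth being careful about is confirming that $S$-adjacency of $T, T'$ descends to $(S \setminus x)$-adjacency of $T \setminus x, T' \setminus x$, which is immediate from Definition \ref{s_adjacent} since removing $x$ cannot introduce new disagreements between the tournaments on matches not involving pairs in $S \setminus x$.
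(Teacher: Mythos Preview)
Your proposal is correct and follows essentially the same approach as the paper's proof: bound each summand in $B$ by $\alpha_2^{RDM}=\tfrac{1}{3}$ via Theorem~\ref{theorem1} (using that $T\setminus x$ and $T'\setminus x$ are $(S\setminus x)$-adjacent), then handle $|I|\le 1$ by observing that the two surviving manipulators beat every team outside $S$ and invoking Lemma~\ref{lemma3}. The paper's argument is identical in structure and detail.
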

\begin{proof}See Appendix \ref{appendix_B} for a proof. 
\end{proof}\\

We introduce some more notation. For $n \in \mathbb{N}$, define $M_n(a_1,a_2,a_3)$ as the maximum winning probability gain that three teams $\{u,v,w\}$ can achieve by manipulation in a tournament $T$ of size $n$, in which there are exactly $a_i$ teams in $T \setminus S$ each of which beats exactly $i$ teams of $S$.
Formally,
\begin{align*}
M_n(a_1,a_2,a_3) &= \max \Big\{r_{S}(T')-r_{S}(T) | T,T' \text{ are } S\text{-adjacent},|T|=n, |S|=3, \\
&\text{ $a_i$ teams in $T \setminus S$ beat exactly $i$ teams in $S$} \Big\}
\end{align*}
\indent Additionally, let $L_i$ be the set of teams in $T \setminus S$ each of which beats exactly $i$ teams in $S$. Let $Q$ be the set of matches in which two teams from $L_i$ play against each other or in which a team from $L_i$ loses to a team from $S$ for $i = 1,2,3$. Notice that $|Q| = 2a_1 + a_2 + \binom{a_1}{2} + \binom{a_2}{2} + \binom{a_3}{2}$ if there are $a_i$ teams in $S \setminus T$ each which beat $i$ teams from $S$. \\
\indent  With the new notation, we are now ready to prove a bound on the term $C$. 
Recall that 
$$C = \sum_{e \notin G} r_{u,v,w}(T'|_{e})-r_{u,v,w}(T|_{e})$$
where $G = I \cup \{uv, vw, uw\}$ is the set of matches in which a team from $S$ loses. Then we have the following bound on $C$. 
\begin{lemma}\label{lemmaC}
For all $S$-adjacent $T$ and $T'$ we have that $C$ is at most
\begin{align*}
&(2a_1 + \binom{a_1}{2})M_{n-1}(a_1-1,a_2,a_3)+ (a_2 + \binom{a_2}{2}) M_{n-1}(a_1,a_2-1,a_3) +  \binom{a_3}{2}M_n(a_1,a_2,a_3-1)\\ &+\sum_{e \notin G \cup Q} r_{u,v,w}(T'|_{e})-r_{u,v,w}(T|_{e})
     \end{align*}
\end{lemma}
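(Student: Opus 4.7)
The plan is to split the sum defining $C$ into two parts according to whether the first match $e$ lies in $Q$:
\[
C \;=\; \sum_{e \in Q} \bigl(r_{u,v,w}(T'|_e) - r_{u,v,w}(T|_e)\bigr) \;+\; \sum_{e \notin G \cup Q} \bigl(r_{u,v,w}(T'|_e) - r_{u,v,w}(T|_e)\bigr).
\]
The second sum is exactly the residual term already appearing in the claimed bound, so the task is to bound each summand of the first sum by an appropriate $M_{n-1}(\cdot,\cdot,\cdot)$ and then to count how many matches of each type there are in $Q$.

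The enabling observation will be that, for every $e \in Q$, the reduced tournaments $T|_e$ and $T'|_e$ are still $S$-adjacent. By definition, every $e \in Q$ either has both endpoints in $T \setminus S$ (a match inside some $L_i$) or has exactly one endpoint in $S$ (a match in which a team of $L_i$ loses to a team of $S$); in either case not both endpoints lie in $S$, so $e$ has the same outcome in $T$ and in $T'$, the same loser $x \in T \setminus S$ is eliminated, and $T|_e$ and $T'|_e$ continue to differ only on the manipulated matches inside $S$. Applying the definition of $M_{n-1}$ to this pair will then give
\[
r_{u,v,w}(T'|_e) - r_{u,v,w}(T|_e) \;\leq\; M_{n-1}(a_1',a_2',a_3'),
\]
where $(a_1',a_2',a_3')$ are the $L$-counts after removing $x$ from $T \setminus S$; since $x \in L_i$, only $a_i$ drops, by one.

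The remainder of the argument is bookkeeping. A team $x \in L_i$ can be eliminated by a match in $Q$ in exactly two ways: via a match against another team in $L_i$ (contributing $\binom{a_i}{2}$ matches in total for each $i$) or via a loss to a team in $S$ (contributing $3-i$ matches per $x \in L_i$, hence $2a_1$ for $L_1$, $a_2$ for $L_2$, and $0$ for $L_3$). Summing the per-match bound over these groups will produce precisely the three leading terms of the claimed inequality. The only non-routine step is the $S$-adjacency verification above; the subtlety to watch for is that matches in which an $L_i$-team \emph{beats} a team of $S$ lie in $G$, not in $Q$, so they must not be double-counted here and are instead absorbed into the $A$-term of Lemma~\ref{lemmaA}.
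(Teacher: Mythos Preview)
Your argument is correct and follows essentially the same route as the paper: split off the matches in $Q$, observe that for each such $e$ the eliminated team lies in some $L_i$ so that $T|_e$ and $T'|_e$ remain $S$-adjacent with the $i$-th count reduced by one, bound by $M_{n-1}$, and count the $\binom{a_i}{2} + (3-i)a_i$ matches of each type. One small slip in your closing remark: the matches in which an $L_i$-team \emph{beats} a team of $S$ belong to $I \subseteq G$ and are accounted for in the $B$-term (Lemma~\ref{lemmaB}), not the $A$-term.
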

\begin{proof} See Appendix \ref{appendix_C} for a proof. 
\end{proof}

\subsubsection{ The case $|I| \leq 4$}\label{caseI4} We summarize our claim when $|I| \leq 4$ in the following lemma

\begin{lemma}\label{lemmaI4}
Let $T$ be a tournament, and $S$ a set of 3 teams. Suppose that there are at most 4 matches in which a team in $S$ loses to a team in $T \setminus S$ (i.e. $|I| \leq 4$). Then 
$\alpha^{RDM}_{S}(T) \leq \frac{31}{60}$
\end{lemma}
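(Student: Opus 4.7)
The plan is to proceed by strong induction on $n = |T|$, using the first-step analysis framework of Section~\ref{FSA_framework} together with Lemmas~\ref{lemmaA}, \ref{lemmaB}, and \ref{lemmaC}. For the base case, I verify the claim directly for $n \leq 5$: only finitely many tournament structures with $|S| = 3$ and $|I| \leq 4$ exist on that many teams, each is analyzable using the computations of Section~\ref{examples}, and the extremal $5$-team example of Section~\ref{lb} saturates the bound.

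For the inductive step on a tournament $T$ of size $n \geq 6$ with $|I| \leq 4$ and any $S$-adjacent $T'$, equation~(\ref{fsa:1}) writes the gain as $(A+B+C)/\binom{n}{2}$. The key observation is that for every match $e \notin G$ the reduced tournament $T|_e$ still satisfies $|I|(T|_e) \leq |I|(T) \leq 4$, since eliminating a team from $T \setminus S$ can only remove matches from $I$; moreover $T|_e$ and $T'|_e$ are still $S$-adjacent. Applying the inductive hypothesis to each such $e$ yields $C \leq (\binom{n}{2} - 3 - |I|) \cdot \tfrac{31}{60}$. In the subcase $|I| \leq 1$, the strengthened bounds $A \leq 1$ and $B = 0$ from Lemmas~\ref{lemmaA} and~\ref{lemmaB} immediately combine with this to give $A + B + C \leq \tfrac{31}{60}\binom{n}{2}$.

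The main obstacle is the remaining range $|I| \in \{2, 3, 4\}$, where the naive substitution $A \leq 7/3$ and $B \leq |I|/3$ combined with the $C$ bound overshoots $\tfrac{31}{60}\binom{n}{2}$ by $(47 - 11|I|)/60$. To close this gap, I would enumerate the finitely many configurations $(a_1, a_2, a_3)$ with $a_1 + 2a_2 + 3a_3 = |I|$, and in each sharpen one of the three bounds by exploiting structural constraints. Two useful avenues: equality in Lemma~\ref{lemmaA} requires each $T' \setminus x$ (for $x \in S$) to admit a Condorcet winner in $\{u,v,w\}\setminus x$, a highly restrictive joint condition on $T'$; and when a reduced tournament at some $e \in Q$ has $|I| \leq 1$, the bound on the corresponding gain drops strictly below $\tfrac{31}{60}$ via Corollary~\ref{corollary} or Lemma~\ref{lemma3}, saving enough to absorb the overshoot.

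The hardest configuration is $(a_1, a_2, a_3) = (0, 2, 0)$ with $|I| = 4$, since the lower bound of Section~\ref{lb} is tight there with exactly this structure. The case analysis must confirm that this is the unique maximizer (up to symmetry and the addition of Lemma~\ref{lemma1}-irrelevant teams), and that every other $|I| \leq 4$ configuration admits a strictly smaller gain, so that the overall bound comes out to exactly $\tfrac{31}{60}$.
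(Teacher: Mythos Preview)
Your overall shape is right --- induction on $n$ via first-step analysis, splitting into $A+B+C$ --- but the induction hypothesis you state is too weak to close, and the savings mechanisms you propose do not cover the critical case.

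Concretely: your inductive hypothesis is only ``$\alpha^{RDM}_S(T) \leq \tfrac{31}{60}$ for all smaller tournaments with $|I|\le 4$.'' With this hypothesis, every term in $C$ is bounded by $\tfrac{31}{60}$, and as you computed, for $|I|\in\{2,3,4\}$ the resulting bound on $(A+B+C)/\binom{n}{2}$ overshoots $\tfrac{31}{60}$. Your proposed fix --- gaining slack on those $e\in Q$ for which the reduced tournament has $|I|\le 1$ --- fails precisely at the tight configuration $(a_1,a_2,a_3)=(0,2,0)$: every $e\in Q$ there reduces to $(0,1,0)$, which has $|I|=2$, so your mechanism never triggers. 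Your other avenue, sharpening Lemma~\ref{lemmaA} below $\tfrac{7}{3}$, is plausible in spirit but you give no quantitative argument, and the paper's own proof of the $(0,2,0)$ case uses $A\le\tfrac{7}{3}$ at face value, so the required $1/20$ of slack genuinely lives in $C$.

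The missing idea is to \emph{strengthen the induction hypothesis}: rather than proving the single bound $\tfrac{31}{60}$, one proves by induction on $n$ a separate bound $M_n(a_1,a_2,a_3)\le f(a_1,a_2,a_3)$ for each of the finitely many configurations with $a_1+2a_2+3a_3\le 4$, with all $f$-values at most $\tfrac{31}{60}$ (e.g.\ $f(0,1,0)=\tfrac{1}{2}$, $f(1,0,0)=\tfrac{1}{6}$, and so on). The point is that matches in $Q$ strictly decrease some $a_i$, so Lemma~\ref{lemmaC} lets you plug in the \emph{tighter} value $f(a_1-1,a_2,a_3)$ etc.\ rather than the uniform $\tfrac{31}{60}$, and these savings propagate through the recursion. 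In the $(0,2,0)$ case, the three $Q$-matches each contribute $f(0,1,0)=\tfrac{1}{2}$ instead of $\tfrac{31}{60}$, recovering exactly the $3\cdot\tfrac{1}{60}=\tfrac{1}{20}$ needed. With this stronger hypothesis the base case is simply $n=3$ (where $S$ is the whole tournament), not a brute-force check up to $n=5$.
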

\begin{proof}
We will show that $M_n(a_1,a_2,a_3) \leq f(a_1,a_2,a_3)$ by induction on $n \in \mathbb{N}$ for the values of $(a_1,a_2,a_3)$ and $f(a_1,a_2,a_3)$ given in Table \ref{tab:upp_bounds} below. Notice that when there are at most 4 matches between a team in $S$ and a team in $T \setminus S$, in which the teams from $S$ loses, then we fall into one of the cases shown in the table for $(a_1,a_2,a_3)$.

\setlength{\tabcolsep}{18pt} 
\renewcommand{\arraystretch}{1.6} 
\begin{table}[h]\label{table}
    \scriptsize
    \centering
    \begin{tabular}{|c|c|}
        \hline
$(a_1,a_2,a_3)$&$f(a_1,a_2,a_3)$\\
\hline
(0,0,0)& 0\\
\hline
(1,0,0)&$\frac{1}{6}$  \\
\hline
(2,0,0) & $\frac{23}{60}$ \\
\hline
(3,0,0) & $\frac{407}{900}$ \\
\hline
(4,0,0) & $\frac{4499}{9450}$  \\
\hline
(0,1,0)& $\frac{1}{2}$\\
\hline
(0,2,0) & $\frac{31}{60}$ \\
\hline
(1,1,0)& $\frac{1}{2}$\\
\hline
(2,1,0)& $\frac{131}{260}$ \\
\hline
(0,0,1)&0 \\
\hline
(1,0,1)&$\frac{11}{27}$\\
\hline
    \end{tabular}
     \caption{Upper bounds on $M_n(a_1,a_2,a_3)$}
    \label{tab:upp_bounds}
\end{table}

\indent \textbf{1. Base case} Our base case is when $n = 3$. If we are in the case of 3 teams then $S$ wins with probability 1, so the maximum gain $S$ can achieve by manipulation is clearly 0, which satisfies all of the bounds in the table. \\
\indent \textbf{2. Induction step} Assume that $M_k(a_1,a_2,a_3) \leq f(a_1,a_2,a_3)$ hold for all $ k < n$ and $a_1,a_2,a_3$ as in Table \ref{tab:upp_bounds}. We will prove the statement for $k = n$. Notice that by Table \ref{tab:upp_bounds} it is clear that $f$ is monotone in each variable i.e. if $a'_i \leq a_i$ for $i = 1,2,3$, then 
\begin{equation}\label{eq:monotonicity}
    f(a'_1,a'_2,a'_3) \leq f(a_1,a_2,a_3)
\end{equation}

Suppose that $e \notin G \cup Q$\footnote{Recall $G$ is the set of matches in which a team from $S$ loses and $Q$ is the set of matches in between two teams from $L_i$ or when a team from $L_i$ loses to a team from $S$ (see discussion before Lemma \ref{lemmaC})}. Then since $e \notin G$, $S$ remains in $T|_{e}$. Let for a tournament $H$ define $t(H) = (a'_1,a'_2,a'_3)$, where in $H$ there are exactly $a'_i$ teams in $H \setminus \{u,v,w\}$ that beat exactly $i$ out of $\{u,v,w\}$. Clearly, we have $t(T|_e) = (a'_1, a'_2,a'_3)$, where $a'_i \leq a_i$. By monotonicity of $f$ in (\ref{eq:monotonicity}) and the inductive hypothesis it follows that 
$$r_{u,v,w}(T'|_{e})-r_{u,v,w}(T|_{e}) \leq M_{n-1}(a'_1,a'_2,a'_3) \leq f(a'_1,a'_2,a'_3) \leq f(a_1,a_2,a_3)$$
Since $|G \cup Q| = 3(1+a_1+a_2+a_3) + \binom{a_1}{2} + \binom{a_2}{2} +\binom{a_3}{2}$, we have that
$$\sum_{e \notin G \cup Q} r_{u,v,w}(T'|_{e})-r_{u,v,w}(T|_{e}) \leq (\binom{n}{2}-3(1+a_1+a_2+a_3)-\binom{a_1}{2}-\binom{a_2}{2}-\binom{a_3}{2})f(a_1,a_2,a_3)$$
Also, by the inductive hypothesis we have 
\begin{align*}
    M_{n-1}(a_1-1,a_2,a_3) &\leq f(a_1-1,a_2, a_3)\\
    M_{n-1}(a_1,a_2-1,a_3) &\leq f(a_1,a_2-1, a_3)\\
    M_{n-1}(a_1,a_2,a_3-1) &\leq f(a_1,a_2, a_3-1)\\
\end{align*}
Therefore, by Lemma \ref{lemmaC}, combined with the inequalities above we have 
\begin{align*}
     C &\leq (2a_1 + \binom{a_1}{2})f(a_1-1,a_2,a_3)+ (a_2 + \binom{a_2}{2}) f(a_1,a_2-1,a_3) + \binom{a_3}{2}f(a_1,a_2,a_3-1) \\
     &+(\binom{n}{2}-3(1+a_1+a_2+a_3)-\binom{a_1}{2}-\binom{a_2}{2}-\binom{a_3}{2})f(a_1,a_2,a_3)
\end{align*}
By Lemma \ref{lemmaB} we have 
$$B \leq \frac{d^{*}_u + d^{*}_v + d^{*}_w}{3} = \frac{a_1 + 2a_2 + 3a_3}{3} = \frac{|I|}{3}$$
Combining the above with bounds and plugging into $(\ref{fsa:1})$ we get
\begin{align*}
        &r_{u,v,w}(T') - r_{u,v,w}(T) \leq \frac{1}{\binom{n}{2}}(A+B+C) \leq \\
        &\leq \frac{1}{\binom{n}{2}}\Bigg[A' + B'+(2a_1 + \binom{a_1}{2})f(a_1-1,a_2,a_3)+ (a_2 + \binom{a_2}{2}) f(a_1,a_2-1,a_3) \\ &+\binom{a_3}{2}f(a_1,a_2,a_3-1) + (\binom{n}{2}-3(1+a_1+a_2+a_3)-\binom{a_1}{2}-\binom{a_2}{2}-\binom{a_3}{2})f(a_1,a_2,a_3)\Bigg]
\end{align*}
where $A' = 1$ and $B' = 0$ if $|I| \leq 1$ and $A' = \frac{7}{3}$ and $B' = \frac{|I|}{3}$ if $|I| \geq 2$ by Lemma \ref{lemmaA} and Lemma \ref{lemmaB}. As the RHS depends only on $(a_1,a_2,a_3)$ we can take the maximum over all tournaments on $n$ teams so we can get 
\begin{align*}
    M_n(a_1,a_2,a_3) &\leq \frac{1}{\binom{n}{2}}\Bigg[A' + B' +(2a_1 + \binom{a_1}{2})f(a_1-1,a_2,a_3)+ (a_2 + \binom{a_2}{2}) f(a_1,a_2-1,a_3) \\ &+\binom{a_3}{2}f(a_1,a_2,a_3-1) \\
    &+ (\binom{n}{2}-3(1+a_1+a_2+a_3)-\binom{a_1}{2}-\binom{a_2}{2}-\binom{a_3}{2})f(a_1,a_2,a_3)\Bigg] \text{   ($\Delta$)}
\end{align*}
Now, apply the formula $(\Delta)$ to each of the cases in Table  \ref{tab:upp_bounds}. We present the computations for $(a_1, a_2, a_3) \in \{(0,0,0), (1,0,0), (0,2,0)\}$ in the body to illustrate the method and we defer the other cases from Table \ref{tab:upp_bounds} to Appendix \ref{appendix_sec4}. Note the manipulators can achieve $\frac{31}{60}$ only when $(a_1,a_2,a_3) = (0,2,0)$.

\textbf{Case 1} $(a_1,a_2,a_3) = (0,0,0)$.  By Lemma \ref{lemma3} it follows that $M_n(0,0,0) = 0 = f(0,0,0)$ as a team from $S$ wins with probability 1 regardless of the matches within $S$. \\

\textbf{Case 2} $(a_1,a_2,a_3) = (1,0,0)$. In this case $|I| = 1$. So by applying $\Delta$ with $A' = 1$ and $B = 0$ we obtain 
$$M_n(1,0,0) \leq \frac{1}{\binom{n}{2}}(1 + 2 f(0,0,0) + (\binom{n}{2}-6)\frac{1}{6}) = \frac{1}{6} = f(1,0,0)$$
 \\

\textbf{Case 3} $(a_1,a_2,a_3) = (0,2,0)$. Applying $\Delta$, we get
\begin{align*}
M_n(0,2,0) &\leq \frac{1}{\binom{n}{2}}(\frac{7}{3} + \frac{4}{3} +  (2 + 1)f(0,1,0) + (\binom{n}{2}-10)f(0,2,0)) \\
    &=  \frac{1}{\binom{n}{2}}(\frac{11}{3}  +  \frac{3}{2} + (\binom{n}{2}-10)\frac{31}{60}) \\
    &= \frac{31}{60} + \frac{1}{\binom{n}{2}}(\frac{22+9}{6} - \frac{31}{6}) = \frac{31}{60} = f(0,2,0)
\end{align*}

\textbf{Case 4} $(a_1,a_2,a_3) = (2,0,0)$. See Appendix ~\ref{appendix_cases}. 


\textbf{Case 5} $(a_1,a_2,a_3) = (3,0,0)$. See Appendix ~\ref{appendix_cases}. 


\textbf{Case 6} $(a_1,a_2,a_3) = (4,0,0)$. See Appendix ~\ref{appendix_cases}.


\textbf{Case 7} $(a_1,a_2,a_3) = (0,1,0)$. See Appendix ~\ref{appendix_cases}. 


\textbf{Case 8} $(a_1,a_2,a_3) = (1,1,0)$. See Appendix ~\ref{appendix_cases}.


\textbf{Case 9} $(a_1,a_2,a_3) = (2,1,0)$. See Appendix ~\ref{appendix_cases}.


\textbf{Case 10} $(a_1,a_2,a_3) = (0,0,1)$. See Appendix ~\ref{appendix_cases}. 


\textbf{Case 11} $(a_1,a_2,a_3) = (1,0,1)$. See Appendix ~\ref{appendix_cases}. 


This, finishes the induction and the proof for the bounds in Table \ref{tab:upp_bounds}. Note that $f(a_1,a_2,a_3) \leq \frac{31}{60}$ for all $a_1,a_2,a_3$ in Table \ref{tab:upp_bounds} and this bounds is achieved when $(a_1,a_2,a_3) = (0,2,0)$ i.e. there are 2 teams that beat exactly two of $S$ as is the case in the optimal example in Section \ref{lb}. Thus,we get that if there are at most 4 matches that a team from $S$ loses from a team in $T \setminus S$, then $\alpha^{RDM}_S(T) \leq \frac{31}{60}$. This finishes the proof of the lemma. 
\end{proof}

\subsubsection{General Upper Bound for the Most Manipulable Tournament}\label{gen_upper_bound}

\begin{lemma}\label{lemmaI5}
Suppose that $\alpha^{RDM}_{S}(T) = \alpha^{RDM}_{3,n}$. Let $I$ be the set of matches a team of $S$ loses to a team from $T \setminus S$. Then $$\alpha^{RDM}_{3,n}= \alpha^{RDM}_S(T) \leq \frac{|I|+7}{3(|I|+3)}$$
\end{lemma}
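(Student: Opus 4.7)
The plan is to combine the first-step decomposition $(\ref{fsa:1})$ with the already-proved bounds on $A$ and $B$ from Lemmas \ref{lemmaA} and \ref{lemmaB}, and then to bound $C$ by exploiting the assumption that $T$ is the most manipulable $n$-team tournament. Set $\alpha := \alpha^{RDM}_S(T) = \alpha^{RDM}_{3,n}$ and pick an $S$-adjacent $T'$ achieving $r_S(T') - r_S(T) = \alpha$. Then $(\ref{fsa:1})$ reads
\[
\binom{n}{2}\,\alpha \;=\; A + B + C.
\]

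Bounding $A+B$ is immediate: Lemmas \ref{lemmaA} and \ref{lemmaB} give $A \leq 7/3$ and $B \leq |I|/3$ in every regime (the sharper bounds when $|I|\leq 1$ are not needed here), so $A + B \leq (|I|+7)/3$.

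The heart of the argument is $C = \sum_{e \notin G}\bigl(r_{u,v,w}(T'|_e) - r_{u,v,w}(T|_e)\bigr)$. I claim each summand is at most $\alpha$. For any $e \notin G$ the loser of $e$ lies outside $S$, and because $T$ and $T'$ agree on every match that is not inside $S$, the loser of $e$ is literally the same team in both tournaments. Hence $T|_e$ and $T'|_e$ are tournaments on the same $n-1$ vertices, both still contain $S$, and they differ only on matches within $S$; in other words, they form an $S$-adjacent pair on $n-1$ teams. Therefore $r_{u,v,w}(T'|_e) - r_{u,v,w}(T|_e) \leq \alpha^{RDM}_{3,n-1}$, which by Lemma \ref{lemma2} is at most $\alpha^{RDM}_{3,n} = \alpha$. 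Since $G = I \cup \{uv,vw,uw\}$ is a disjoint union with $|G| = |I|+3$, the number of matches $e \notin G$ equals $\binom{n}{2} - |I| - 3$, giving
\[
C \leq \left(\binom{n}{2} - |I| - 3\right)\alpha.
\]

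Plugging these bounds into $\binom{n}{2}\alpha = A + B + C$ yields
\[
\binom{n}{2}\,\alpha \;\leq\; \frac{|I|+7}{3} + \left(\binom{n}{2} - |I| - 3\right)\alpha,
\]
which rearranges to $(|I|+3)\alpha \leq (|I|+7)/3$, i.e., $\alpha \leq \frac{|I|+7}{3(|I|+3)}$, as desired. The one step that warrants care is the justification of the per-term bound on $C$: one must verify that for $e \notin G$ the eliminated team is identical in $T$ and $T'$, so that $T|_e$ and $T'|_e$ form a legitimate $S$-adjacent pair on $n-1$ teams, which is what lets the maximality of $T$ be pushed through Lemma \ref{lemma2}. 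Everything else is arithmetic.
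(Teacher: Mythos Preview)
Your proof is correct and follows essentially the same argument as the paper: apply the first-step decomposition, bound $A$ and $B$ via Lemmas~\ref{lemmaA} and~\ref{lemmaB}, bound each summand of $C$ by $\alpha^{RDM}_{3,n-1}\le\alpha^{RDM}_{3,n}$ using Lemma~\ref{lemma2}, and rearrange. If anything, you are slightly more explicit than the paper in justifying that $T|_e$ and $T'|_e$ are $S$-adjacent for $e\notin G$.
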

\begin{proof}
Let $T$ and $T'$ be $S$-adjacent tournaments on $n$ vertices such that $S = \{u,v,w\}$ and 
$$\alpha_{3,n}^{RDM} = \alpha^{RDM}_S(T) = r_S(T') - r_s(T)$$
I.e. $T$ is the "worst" example on $n$ vertices. By (\ref{fsa:1}) we have 
$$
    \alpha_{3,n}^{RDM} = \frac{1}{\binom{n}{2}}(A+B+C)
$$
where $A,B$ and $C$ were defined in Section \ref{FSA_framework}. By Lemma \ref{lemmaA} we have
$$A \leq \frac{7}{3}$$
and by Lemma \ref{lemmaB}
$$B \leq \frac{d^{*}_u + d^{*}_v + d^{*}_w}{3} = \frac{|I|}{3}$$

Let $e \notin G$. Notice that both $T'|_{e}$ and $T|_{e}$ are tournaments on $n-1$ vertices and by definition of $G$, $u,v,w$ are not eliminated in both $T'|_{e}$ and $T|_{e}$. Moreover, $T'|_{e}$ and $T|_{e}$ are $S$-adjacent. Therefore, for every $e \notin G$, we have by Lemma \ref{lemma2}
$$
r_{u,v,w}(T'|_{e})-r_{u,v,w}(T|_{e}) \leq \alpha_{3,n-1}^{RDM} \leq \alpha_{3,n}^{RDM}
$$
By using the above on each term in $C$ and the fact that $|G| = 3 +|I|$, we get that 
$$
C \leq (\binom{n}{2} -(3 + |I|))\alpha_{3,n}^{RDM}
$$
By using the above 3 bounds we get 
\begin{align*}
 \alpha_{3,n}^{RDM} &\leq \frac{1}{\binom{n}{2}} (\frac{7}{3}+\frac{|I|}{3} + (\binom{n}{2} -(3 + |I|))\alpha_{3,n}^{RDM}) \\
    \iff (|I|+3)\alpha_{3,n}^{RDM} &\leq \frac{|I|+7}{3}\\
    \iff \alpha_{3,n}^{RDM} = \alpha^{RDM}_S(T) &\leq \frac{|I|+7}{3(|I|+3)}
\end{align*}
as desired.
\end{proof}

\subsubsection{Proof of Theorem \ref{theorem2}}\label{final_proof} Suppose that $T$ is the most manipulable tournament on $n$ vertices i.e. it satisfies $\alpha^{RDM}_{S}(T) = \alpha^{RDM}_{3,n}$. If $|I| \leq 4$, then by Lemma \ref{lemmaI4}, we have that 
$$\alpha_{3,n}^{RDM} = \alpha^{RDM}_S(T) \leq \frac{31}{60}$$
If $|I| \geq 5$, then by Lemma \ref{lemmaI5}
$$\alpha_{3,n}^{RDM} = \alpha^{RDM}_S(T) \leq \frac{|I|+7}{3(|I|+3)} \leq \frac{5+7}{3(5+3)} = \frac{1}{2}$$
where above we used that $\frac{x+7}{3(x+3)}$ is decreasing for $x \geq 5$. Combining the above bounds, we obtain $\alpha_{3,n}^{RDM} \leq \frac{31}{60}$ for all $n \in \mathbb{N}$. Therefore, 
$$\alpha_{k}^{RDM} = \max_{n \in \mathbb{N}}\alpha_{k,n}^{RDM} \leq \frac{31}{60}$$
which proves the upper bound and finishes the proof of Theorem \ref{theorem2}.

\section{Sybil Attacks on Tournaments}\label{copiessection}

\subsection{Main Results on Sybil Attacks on Tournaments}

Recall our motivation from the Introduction. Imagine that a tournament rule is used as a proxy for a voting rule to select a proposal. The proposals are compared head-to-head, and this constitutes the pairwise matches in the resulting tournament. A proposer can try to manipulate the protocol with a Sybil attack and submit many nearly identical proposals with nearly equal strength relative to the other proposals. The proposer can choose to manipulate the outcomes of the head-to-head comparisons between two of his proposals in a way which maximizes the probability that a proposal of his is selected. In the tournament $T$ his proposal corresponds to a team $u_1$, and the tournament $T'$ resulting from the Sybil attack is a member of $Syb(T,u_1,m)$ (Recall Definition 2.9). The questions that we want to answer in this section are: (1) Can the Sybils manipulate their matches to successfully increase their collective probability of winning? and (2) Is it beneficial for the proposer to create as many Sybils as possible?

The first question we are interested is whether any group of Sybils can manipulate successfully to increase their probability of winning. It turns out that that the answer is No. We first prove that the probability that a team which is not a Sybil wins does not depend on the matches between the Sybils.

\begin{lemma}\label{lemmacopies}
There exists a function $q$ that takes in as input integer $m$, tournament $T$, team $u_1 \in T$, and team $v \in T \setminus u_1$ with the following property. For all $T' \in Syb(T,u_1,m)$, we have
$$r_{v}(T')  = q(m,T,u_1,v)$$
where the dependence on $u_1$ is encoded as the outcomes of its matches with the rest of $T$. 
\end{lemma}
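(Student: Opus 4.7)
The plan is to prove the statement by strong induction on the size $N := |T'| = n + m - 1$, using first-step analysis (Observation \ref{obs_fsa}) as the main engine. The base cases are $m = 1$ (in which case $Syb(T, u_1, 1) = \{T\}$ and we may simply define $q(1,T,u_1,v) := r_v(T)$) and $n = 1$ (where $T \setminus u_1$ is empty, so no $v$ exists and the claim is vacuous). These handle all tournaments $T'$ with $N \leq 2$ as well.

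For the inductive step, fix $m \geq 2$ and $n \geq 2$ and let $T' \in Syb(T,u_1,m)$. First-step analysis writes
$$r_v(T') = \frac{1}{\binom{N}{2}} \sum_{e} r_v(T'|_e).$$
The key idea is to partition the $\binom{N}{2}$ matches into four types and argue that both the count in each type and the value $r_v(T'|_e)$ of each summand depend only on $(m, T, u_1, v)$. The four types are: (i) matches between two Sybils $u_i, u_j$, where exactly one Sybil is eliminated and $T'|_e \in Syb(T, u_1, m-1)$, so by induction $r_v(T'|_e) = q(m-1, T, u_1, v)$ independently of the internal Sybil matches; (ii) matches between a Sybil $u_i$ and a non-Sybil $a$ with $u_1$ beating $a$ in $T$, so $a$ is eliminated and $T'|_e \in Syb(T \setminus a, u_1, m)$, giving $r_v(T'|_e) = q(m, T \setminus a, u_1, v)$ (or $0$ if $v = a$); (iii) matches between a Sybil and a non-Sybil $a$ with $a$ beating $u_1$, so a Sybil is eliminated and $T'|_e \in Syb(T, u_1, m-1)$; and (iv) matches between two non-Sybils, whose outcome is determined by $T$ alone and yields a Sybil attack on $T$ minus the loser.

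The main thing to verify carefully is that in each category the summands factor through $(m, T, u_1, v)$ and that the \emph{number} of matches of each type is likewise determined by $(m, T, u_1)$ and does not depend on the internal Sybil structure. Type (i) contributes $\binom{m}{2}$ equal terms; types (ii) and (iii) contribute $m$ terms per non-Sybil $a$, separated by whether $u_1$ beats $a$ in $T$; type (iv) contributes one term per unordered pair of non-Sybils, with the loser determined by $T$. Summing these contributions yields an expression that depends only on $(m, T, u_1, v)$, and this expression \emph{is} the function $q(m, T, u_1, v)$.

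The main obstacle, though minor, is bookkeeping: one must handle the edge case where $v$ itself is the eliminated team in type (ii) or (iv) (in which case $r_v(T'|_e) = 0$, so the claim still holds, but the induction hypothesis does not technically apply), and one must ensure that the descent in the induction is well-founded in both arguments (decreasing $m$ in types (i) and (iii), decreasing $n$ in types (ii) and (iv)). Using strong induction on $N = n + m - 1$ makes both kinds of descent legal simultaneously and avoids a nested induction.
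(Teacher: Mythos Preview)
Your proposal is correct and takes essentially the same approach as the paper: both argue by strong induction on $|T|+m$ (your $N=n+m-1$ is just a shift of this) via first-step analysis, partitioning the first match according to whether a Sybil or a specific non-Sybil is eliminated and invoking the inductive hypothesis on the resulting smaller Sybil attack. The paper groups matches by the identity of the eliminated team (Sybil, or $t\in A$, or $t\in B$, or $v$) while you group them by match type (Sybil--Sybil, Sybil--$A$, Sybil--$B$, non-Sybil--non-Sybil), but these are the same partition described two ways, and the resulting recursive formula for $q$ is identical.
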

\begin{proof}
See Appendix ~\ref{appendix_sec5} for a full proof. 
\end{proof}\\

Note that by Lemma \ref{lemmacopies} $r_v(T') = q(m,T,u_1,v)$ does not depend on which tournament $T' \in Syb(T, u_1,m)$ is chosen. Now, we prove our first promised result. Namely, that no number of Sybils in a Sybil attack can manipulate the matches between them to increase their probability of winning.

\begin{theorem}\label{theorem4}
Let $T$ be a tournament, $u_1 \in T$ a team, and $m$ and integer. Let $T_1' \in Syb(T,u_1,m)$. Let $S = \{u_1, \ldots, u_m\}$. Then 
$$\alpha_{S}^{RDM}(T_1') = 0$$

\end{theorem}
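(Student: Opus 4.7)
The plan is to deduce this theorem directly from Lemma \ref{lemmacopies}, which already does all the real work. The key observation is that any Sybil manipulation produces another tournament in $Syb(T,u_1,m)$, and Lemma \ref{lemmacopies} says that the winning probability of every non-Sybil is invariant across $Syb(T,u_1,m)$. Since the total probability sums to one, the collective Sybil probability must also be invariant.

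More concretely, first I would unpack what an $S$-adjacent modification of $T_1'$ looks like when $S = \{u_1,\ldots,u_m\}$. By Definition \ref{s_adjacent}, any $T_2'$ that is $S$-adjacent to $T_1'$ agrees with $T_1'$ on every match $\{i,j\}\not\subseteq S$. In particular, for every team $a\in T\setminus u_1$ and every $u_i$, the outcome of $u_i$ vs.\ $a$ is the same in $T_2'$ as in $T_1'$, hence the same as $u_1$ vs.\ $a$ in $T$ (by property 3 of Definition \ref{sybil_attack_def}). Matches between teams in $T\setminus u_1$ are also preserved. Thus $T_2'$ still satisfies all four conditions of Definition \ref{sybil_attack_def}, so $T_2' \in Syb(T,u_1,m)$.

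Now I would apply Lemma \ref{lemmacopies} to both $T_1'$ and $T_2'$: for every non-Sybil $v \in T\setminus u_1$,
\[
r_v(T_1') \;=\; q(m,T,u_1,v) \;=\; r_v(T_2').
\]
Summing over all non-Sybils and using $\sum_{\text{all teams}} r_{\cdot}(\cdot) = 1$,
\[
r_S(T_1') \;=\; 1 - \sum_{v\in T\setminus u_1} r_v(T_1') \;=\; 1 - \sum_{v\in T\setminus u_1} r_v(T_2') \;=\; r_S(T_2').
\]
Therefore $r_S(T_2') - r_S(T_1') = 0$ for every $S$-adjacent $T_2'$, which by Definition \ref{manipulatingatournament} gives $\alpha_S^{RDM}(T_1') = 0$.

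There is essentially no obstacle here once Lemma \ref{lemmacopies} is in hand; the only subtlety worth stating carefully is the verification that $S$-adjacency really keeps us inside $Syb(T,u_1,m)$, i.e.\ that modifying only the internal Sybil matches does not disturb any of the conditions in Definition \ref{sybil_attack_def}. All the heavy lifting lies inside the proof of Lemma \ref{lemmacopies}; this theorem is a clean corollary.
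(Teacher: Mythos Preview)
Your proof is correct and follows essentially the same approach as the paper's own proof: both observe that any $S$-adjacent $T_2'$ lies in $Syb(T,u_1,m)$, apply Lemma~\ref{lemmacopies} to conclude that every non-Sybil's winning probability is unchanged, and deduce by complementation that $r_S$ is invariant. Your version spells out the verification that $T_2'\in Syb(T,u_1,m)$ in more detail than the paper does, but the argument is the same.
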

\begin{proof}
Let $T_1'$ and $T_2'$ be $S$-adjacent. By Definition 2.9, $T_2' \in Syb(T,u_1,m)$. Therefore by Lemma \ref{lemmacopies}, $r_v(T_1') = r_v(T_2') = q(m,T,u_1,v)$ for all $v \in T \setminus u_1$. Using this we obtain
$$r_{S}(T_1') =1 - \sum_{v \in T \setminus u_1} r_{v}(T_1') = 1 - \sum_{v \in T \setminus u_1} r_{v}(T_2') = r_{S}(T_2')$$
Therefore, $r_{S}(T_1') = r_{S}(T_2')$ for all $S$-adjacent $T_1', T_2'$, which implies the desired result. 
\end{proof}\\

Theorem \ref{theorem4} says that any number of Sybils cannot manipulate to increase their collective probability of winning. This leaves the question whether it is beneficial for the proposer to send many (nearly) identical proposals to the tournament to maximize the probability that a proposal of his is selected. We show that Randomized Death Match disincentivizes such behaviour. When $u_1$ is a Condorcet winner in $T$, then by Lemma \ref{lemma3} a Sybil will win with probability one. 
We prove that if $u_1$ is not a Condorcet winner in $T$ then as $m$ goes to infinity, the maximum probability (over all Sybils attacks of $u_1$) that any Sybil wins goes to 0. Or equivalently, RDM is Asymptotoically Strongly Sybil-Proof (recall Definition 2.10).\\
\indent Before we state our second main theorem let's recall some notation. Let $u_1 \in T$ be a team (which is not a Condorcet winner). Let $A$ be the set of teams in $T$ that $u_1$ beats, and $B$ the of teams $u_1$ loses to. Let $T' \in Syb(T,u_1, m)$ and $v \in T \setminus u_1$. By Lemma \ref{lemmacopies}, $r_v(T') = q(m, T, u_1,v)$ for all $T' \in Syb(T,u_1,m)$. Also, by Lemma \ref{lemmacopies}, 
\begin{equation}\label{prob_sybils}
r_{u_1, \ldots, u_m}(T') = 1-\sum_{v \in T \setminus u_1} r_v(T') = 1-\sum_{v \in T \setminus u_1} q(m,T,u_1,v) 
\end{equation}
and 
\begin{equation}\label{prob_A}
r_{A}(T') = \sum_{v \in A}q(m,T,u_1,v)
\end{equation}
Note that the terms in the RHS of each of (\ref{prob_sybils}) and (\ref{prob_A}) depends only on $T, u_1$ and $m$. Thus, we can define the functions
$$h(m,T,u_1) = r_{u_1, \ldots, u_m}(T')$$
$$g(m,T,u_1) = r_{A}(T')$$
and 
$$p(m,T,u_1) = h(m,T,u_1) + g(m,T,u_1)$$
(here $p(m,T,u_1)$ is the total collective probability that a Sybil or a team from $A$ wins) \\
\indent We are now ready to prove our second main result. Namely, that RDM is Asymptotically Strongly Sybil-Proof (Definition \ref{ASSP_def}). Before we present the result (Theorem \ref{theoremcopiesto0}) we will try to convey some intuition for why RDM should be ASSP. Observe that the only way a Sybil can win is when all the teams from $B$ are eliminated before all the Sybils. The teams from $B$ can only be eliminated by teams from $A$. However, as $m$ increases there are more Sybils and, thus, the teams from $A$ are intuitively more likely to all lose the tournament before the teams from $B$. When there are no teams from $A$ left and at least one team from $B$ left, no Sybil can win. In fact, this intuition implies something stronger than RDM being ASSP: the collective winning probability of the Sybils and the teams from $A$ (denoted by $p(m,T,u_1)$) converges to 0 as $m$ converges to $\infty$ (or, equivalently, the probability that a team from $B$ wins goes to $1$). This intuition indirectly lies behind the technical details of the proof of Theorem \ref{theoremcopiesto0}.

\begin{theorem}\label{theoremcopiesto0}
Randomized Death Match is Asymptotically Strongly Sybil-Proof. In fact a stronger statement holds, namely if $u_1 \in T$ is not a Condorcet winner, then
$$\lim_{m \to \infty} p(m,T,u_1) = 0$$
\end{theorem}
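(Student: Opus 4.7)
The plan is to induct on $|T|$ using Observation~\ref{obs_fsa} (first-step analysis) applied to the coalition $S = A \cup \{u_1,\ldots,u_m\}$ in some $T' \in Syb(T,u_1,m)$. By Lemma~\ref{lemmacopies}, $p(m,T,u_1) = r_S(T')$ is well-defined regardless of which $T'$ we pick. The base cases are immediate: if $|T|=2$ or $A=\emptyset$, then every non-Sybil beats every Sybil, so Lemma~\ref{lemma3} forces a team in $B$ to win and $p(m,T,u_1) = 0$ for every $m$.

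For the inductive step I would classify first-round matches by which team gets eliminated. The $\binom{m}{2}$ Sybil-Sybil matches and the $m|B|$ Sybil-vs-$B$ matches both eliminate a Sybil and leave a tournament in $Syb(T,u_1,m-1)$; eliminating $v\in A$ leaves a tournament in $Syb(T\setminus v,u_1,m)$; and eliminating $w\in B$ leaves a tournament in $Syb(T\setminus w,u_1,m)$. Counting matches of each type and applying Observation~\ref{obs_fsa} yields
$$p(m,T,u_1) = \frac{1}{\binom{m+|T|-1}{2}}\Bigl[\bigl(\tbinom{m}{2}+m|B|\bigr)p(m-1,T,u_1) + \sum_{v\in A}(m+d_v^T-1)\,p(m,T\setminus v,u_1) + \sum_{w\in B}d_w^T\,p(m,T\setminus w,u_1)\Bigr].$$

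Next I would invoke the inductive hypothesis. Deleting $v\in A$ leaves $B$ unchanged, so $u_1$ is still not a Condorcet winner in $T\setminus v$ and $p(m,T\setminus v,u_1)\to 0$. Deleting $w\in B$ keeps $u_1$ a non-Condorcet winner whenever $|B|\ge 2$. Expanding the ratios, the coefficient of $p(m-1,T,u_1)$ equals $1 - \tfrac{2|A|}{m} + O(1/m^2)$; the $A$-sum is at most $|A|(m+O(1))\max_v p(m,T\setminus v,u_1)\big/\Theta(m^2) = o(1/m)$; and the $B$-sum is $O(1/m^2)$. Collecting the errors into $\delta_m = o(1/m)$, this gives
$$p(m,T,u_1) \leq \Bigl(1 - \tfrac{2|A|}{m} + O(1/m^2)\Bigr)p(m-1,T,u_1) + \delta_m.$$
Since $|A|\ge 1$ in the inductive case, iterating the recursion and using $\prod_{k=K}^m\bigl(1-\tfrac{2|A|}{k}\bigr) = \Theta(m^{-2|A|})$ together with a routine estimate on the accumulated $\delta_j$ contributions (namely $\frac{1}{m^{2|A|}}\sum_{j} j^{2|A|-1}\delta_j\to 0$ since $j\delta_j\to 0$) forces $p(m,T,u_1)\to 0$.

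The main obstacle is the edge case $|B|=1$: removing the unique $w\in B$ makes $u_1$ a Condorcet winner of $T\setminus w$, so $p(m,T\setminus w,u_1) = 1$ rather than vanishing with $m$, and the inductive hypothesis fails for that term. Luckily its coefficient $d_w^T$ is a constant independent of $m$, so the contribution is only $d_w^T/\binom{m+|T|-1}{2} = O(1/m^2)$, which is absorbed harmlessly into $\delta_m$. Once this subtlety is dispatched, the contractive recursion uniformly delivers the desired limit and the induction closes.
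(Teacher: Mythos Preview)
Your approach is correct and essentially the same as the paper's: both derive the identical first-step recursion, apply the inductive hypothesis to the $A$-terms, bound the $B$-sum trivially by $O(1/m^{2})$, and then close via an $\epsilon$-argument iterated in $m$. The paper inducts on $|A|$ rather than $|T|$ (so the $|B|=1$ edge case you flag never even arises, since the $B$-terms are never fed to the hypothesis) and finishes by multiplying through by $\binom{m+2}{2}$ and telescoping instead of expanding the product $\prod(1-2|A|/k)$, but these are cosmetic differences in packaging rather than in strategy.
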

\begin{proof}See Appendix ~\ref{appendix_sec5} for a full proof. 
\end{proof} \\


\subsection{On a Counterexample to an Intuitive Claim}
We will use Theorem \ref{theoremcopiesto0} to prove that RDM does not satisfy a stronger version of the monotonicity property in Definition \ref{monotonedef}. First, we give a generalization of the definition for monotonicity given in Section \ref{examples}
\begin{definition}[Strongly monotone]
Let $r$ be a tournament rule. Let $T$ and let $C \cup D$ be any splitting of the teams in $T$ into two disjoint sets. A tournament rule $r$ is \textit{strongly monotone} for every $(u,v) \in C \times D$, if $T'$ is $\{u,v\}$-adjacent to $T$ such that $u$ beats $v$ in $T'$ we have $r_{C}(T') \geq r_{C}(T)$
\end{definition}
Intuitively, $r$ is Strongly monotone if whenever flipping a match between a team from $C$ and a team from $D$ in favor of the team from $C$ makes $C$ better off. Notice that if $|C| = 1$ this is the usual definition of monotonicity (Definition \ref{monotonedef}), which is satisfied by RDM by Lemma \ref{rdm_monotone}. However, RDM is not strongly monotone, even though strong monotonicity may seem like an intuitive property to have. 
\begin{claim}\label{rdm_not_strongly_monotne}
\label{claim_rdm_not_monotone}
RDM is not strongly monotone
\end{claim}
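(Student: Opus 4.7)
The plan is to disprove strong monotonicity by exhibiting a single explicit counterexample on $5$ teams, with the construction guided by the Sybil intuition behind Theorem \ref{theoremcopiesto0}: when $u_1$ is not a Condorcet winner, adding Sybils of $u_1$ drives the collective winning probability of the non-Sybil teams to $1$, so one expects that a single match-flip that slightly breaks this Sybil symmetry can strictly decrease the non-Sybils' collective winning probability.

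First I would define $T$ on $\{u_1,u_2,u_3,v,a\}$ in which each $u_i$ beats $a$, $v$ beats each $u_i$, $a$ beats $v$, and the Sybil internal matches form a $3$-cycle $u_1 \to u_2 \to u_3 \to u_1$. Take $C = \{v,a\}$, $D = \{u_1,u_2,u_3\}$, and let $T'$ be the $\{a,u_1\}$-adjacent tournament in which $a$ beats $u_1$ (with all other matches unchanged); this is a flip in favor of the $C$-team $a$ against the $D$-team $u_1$, so strong monotonicity would require $r_C(T') \geq r_C(T)$. The Sybil recursion one derives from Observation \ref{obs_fsa} on the base $3$-cycle $u_1 \to a \to v \to u_1$ gives $r_v^{(m)} = \tfrac{m}{m+2}$ and $r_a^{(m)} = \tfrac{2}{(m+1)(m+2)}$ for $m$ Sybils, and substituting $m=3$ yields $r_C(T) = 3/5 + 1/10 = 7/10 = 126/180$. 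For $T'$ I would run first-step analysis directly: each of the four $4$-team residuals $T'\setminus u_1$, $T'\setminus u_2$, $T'\setminus u_3$, $T'\setminus a$ reduces either to a $3$-cycle (after using Lemma \ref{lemma1} to drop the dominated team $u_1$ inside $T'\setminus u_2$), to the Sybil attack $T^{(2)}$, or to a short FSA computation whose pieces are the $3$- and $4$-team values worked out in Section \ref{examples}. Assembling these gives $r_v(T') = 17/36$ and $r_a(T') = 2/9$, and therefore $r_C(T') = 25/36 = 125/180 < 126/180 = r_C(T)$, contradicting strong monotonicity.

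The main difficulty is purely computational, since the gap $r_C(T) - r_C(T') = 1/180$ is tight and every intermediate first-step value must be handled without arithmetic error. A conceptual subtlety worth recording in the write-up is that the cyclic choice of Sybil matches is essential: if instead one arranges the Sybils transitively so that $u_1$ beats both $u_2$ and $u_3$, then $T'$ becomes a genuine Sybil attack of a $4$-team base tournament on $\{u_1,u_2,v,a\}$, Lemma \ref{lemmacopies} and Theorem \ref{theoremcopiesto0} apply cleanly to $T'$ as well, and a parallel computation gives $r_C(T') > r_C(T)$. It is precisely the asymmetry introduced by the cyclic Sybil matches---which destroys the Sybil structure exploited by Theorem \ref{theoremcopiesto0}---that makes a single-match flip strictly harmful to $C$.
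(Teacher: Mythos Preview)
Your argument is correct, and I have verified the key arithmetic: with the cyclic Sybil arrangement one indeed gets $r_C(T)=7/10$ from the closed-form Sybil recursion, and first-step analysis on $T'$ gives $r_v(T')=17/36$, $r_a(T')=2/9$, hence $r_C(T')=25/36=125/180<126/180=r_C(T)$. One small expository omission: to compute $r_a(T')$ via Observation~\ref{obs_fsa} you also need the residual $T'\setminus v$ (not just the four you list), where $a$ plays the role of $a_3$ in the canonical $4$-team tournament of Section~\ref{examples} and so $r_a(T'\setminus v)=1/18$; your final value $r_a(T')=2/9$ already incorporates this, so it is only the prose that needs patching.

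Your route is genuinely different from the paper's. The paper argues by contradiction using Theorem~\ref{theoremcopiesto0}: it takes $C$ to be the set of $m$ Sybils for $m$ large enough that $r_C<1/6$, then flips \emph{many} matches (all Sybils but $a_1$ throw to $b$) and invokes strong monotonicity iteratively to force $r_C$ to stay below $1/6$, contradicting the Lemma~\ref{lemma1} reduction to the base $3$-cycle where $r_{a_1}=1/3$. The trade-offs are clear: the paper's proof is arithmetic-free and makes the dependence on the Sybil asymptotics explicit, but it is non-constructive in $m$ and does not exhibit a single-flip violation of the definition. Your proof is fully explicit on five teams, violates strong monotonicity with a \emph{single} $C$--$D$ flip (so it meets the definition head-on rather than via iteration), and is logically independent of Theorem~\ref{theoremcopiesto0}; the cost is the delicate $1/180$ margin, which demands the careful bookkeeping you flag. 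Your closing remark about why the cyclic Sybil arrangement is essential is a nice structural explanation that the paper's proof does not surface.
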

\begin{proof} 
Suppose the contrary i.e. RDM is strongly monotone. Start with a 3-cycle tournament $T$ where $a_1$ beats $b$, $b$ beats $c$, and $c$ beats $a_1$. Let $T' \in Syb(T,a_1,m)$ be a Sybil attack of $a_1$ on $T$ with $m$ Sybils. Let the Sybils be $C = \{a_1,a_2, \ldots, a_m\}$ where $a_i$ beats $a_j$ in $T'$ for $i < j$.  By Theorem \ref{theoremcopiesto0} we can take $m$ large enough so that $r_{C}(T') < \frac{1}{6}$. Suppose all Sybils in $C$ but $a_1$ throw all of their matches with $b$  and denote the tournament resulting from that $T''$. Then, if RDM were strongly monotone we would have $r_{C}(T'') \leq r_{C}(T') < \frac{1}{6}$ (start with $T''$ and iteratively apply strong monotonicity). Note that starting from $T''$ and iteratively applying Lemma \ref{lemma1} to $a_{i}$ and removing $a_i$ from the tournament for $i = m,m-1, \ldots, 2$ we will obtain the tournament $T$, and the probability distribution over the winner in $T''$ will be the same as in $T$. Therefore, $r_{C}(T'') \geq r_{a}(T'') = r_{a}(T) = \frac{1}{3}$, but $r_{a}(T'') < \frac{1}{6}$, a contradiction with our assumption that RDM is strongly monotone. 
\end{proof}

\section{Conclusion and Future Work}\label{sec:conclusion}
We use a novel first-step analysis to nail down the minimal $\alpha$ such that RDM is $3$-SNM-$\alpha$. Specifically, our main result shows that $\alpha^{RDM}_3 = \frac{31}{60}$. Recall that this is the first tight analysis of any Condorcet-consistent tournament rule for any $k > 2$, and also the first monotone, Condorcet-consistent tournament rule that is $k$-SNM-$\alpha$ for any $k > 2,\alpha < 1$. We also initiate the study of manipulability via Sybil attacks, and prove that RDM is Asymptotically Strongly Sybil-Proof. 

Our technical approach opens up the possibility of analyzing the manipulability of RDM (or other tournament rules) whose worst-case examples are complicated-but-tractable. For example, it is unlikely that the elegant coupling arguments that work for $k=2$ will result in a tight bound of $31/60$, but our approach is able to drastically reduce the search space for a worst-case example, and a tractable case analysis confirms that a specific 5-team tournament is tight. Our approach can similarly be used to analyze the manipulability of RDM for $k > 3$, or other tournament rules. However, there are still significant technical barriers for future work to overcome in order to keep analysis tractable for large $k$, or for tournament rules with a more complicated recursive step. Still, our techniques provide a clear approach to such analyses that was previously non-existent.

\bibliographystyle{alpha}
\bibliography{References} 

\appendix
\section{Omitted proofs}
\subsection{Omitted proofs from Section ~\ref{examples}}\label{appendix_sec3}

\newenvironment{myproofL2}{\paragraph{Proof of Lemma \ref{lemma2}.}}{\hfill $\square$}

\begin{myproofL2}
Consider a tournament $T$, on $n$ vertices and a set $S$ of size $k$ such that $S$ can manipulate the matches between them to a tournament $T'$ so that their probability of winning increases by $\alpha_{k,n}^r$, i.e. 
$$\alpha_S^r(T) = r_{S}(T') - r_{S}(T) =  \alpha_{k,n}^r$$
Let $u$ be a new team which loses to all teams in $T$. By Lemma \ref{lemma1} it follows that $r_{S}(T') = r_{S}(T' \cup u)$ and $r_{S}(T) = r_{S}(T \cup u)$. Therefore, 
$$\alpha_{k,n+1}^r \geq r_{S}(T' \cup u) - r_{S}(T \cup u) = r_{S}(T') - r_{S}(T) = \alpha_{k,n}^r$$
\end{myproofL2}

\newenvironment{myproofL3}{\paragraph{Proof of Lemma \ref{lemma3}.}}{\hfill $\square$}

\begin{myproofL3}
Suppose the contrary i.e. there is some $v \in T \setminus S$ which wins in some execution of RDM. Consider the round in which the last team $u \in S$ is eliminated by some team $u'$. Clearly, $u' \notin S$ and thus it cannot eliminate $u$, i.e. a contradiction. 
\end{myproofL3}

\newenvironment{myproofLmonotone}{\paragraph{Proof of Lemma \ref{rdm_monotone}}}{\hfill $\square$}

\begin{myproofLmonotone}
If $u$ beats $v$ in $T'$, then $r_{u}(T) = r_{u}(T')$. Suppose $v$ beats $u$ in $T'$. Consider an execution $E$ of RDM in $T'$ in which $u$ wins. Notice that this execution cannot contain the match $(u,v)$ since $u$ will get eliminated before $v$. Therefore, $E$ is also a valid execution of RDM in $T$. This, provides an injective mapping from the valid executions in $T'$ where $u$ wins to the valid executions in $T$ where $u$ wins. Therefore, $r_{u}(T) \geq r_{u}(T')$.
\end{myproofLmonotone}

\subsection{Omitted proofs from Section ~\ref{rdm3}}\label{appendix_sec4}

\newenvironment{myproof}{\paragraph{Proof of Lemma \ref{lemmaA}}}{\hfill $\square$}

\begin{myproof} We claim that there exists some $x^{*} \in \{u,v,w\}$ such that $\min(\ell_{x^{*}}, \ell'_{x^{*}}) \geq 1$. Indeed the possible values for $(\ell_u,\ell_v,\ell_w)$ and $(\ell'_u,\ell'_v,\ell'_w)$ are $(1,1,1)$ or some permutation of $\{2,1,0\}$. Therefore, as there are least 2 non-zero entries in each of $(\ell_u,\ell_v,\ell_w)$ and $(\ell'_u,\ell'_v,\ell'_w)$, there must be a non-zero entry on which they overlap. This means that there is some $x^{*} \in \{u,v,w\}$ such that $\min(\ell_{x^{*}}, \ell'_{x^{*}}) \geq 1$. Let $\{u,v,w\}\setminus x^{*} = \{y^{*},z^{*}\}$.
By Theorem \ref{theorem1}, we have 
\begin{equation}\label{equation}
r_{\{u,v,w\} \setminus x^{*}}(T' \setminus x^{*}) -  r_{\{u,v,w\}\setminus x^{*}}(T \setminus x^{*}) \leq \frac{1}{3}
\end{equation}
By using the above results and the fact that $\ell'_{x^{*}} + \ell'_{y^{*}} +\ell'_{z^{*}} = 3$ we get the following bound
\begin{align*}
A &= \sum_{x \in \{u,v,w\}} \ell'_x r_{\{u,v,w\} \setminus x}(T' \setminus x) - \ell_x r_{\{u,v,w\}\setminus x}(T \setminus x) \leq \\
&\leq \ell'_{x^{*}} r_{\{u,v,w\} \setminus x^{*}}(T' \setminus x^{*}) - \ell_{x^{*}} r_{\{u,v,w\}\setminus x^{*}}(T \setminus x^{*}) + \ell'_{y^{*}} r_{\{u,v,w\} \setminus y^{*}}(T' \setminus y^{*}) \\
&+ \ell'_{z^{*}} r_{\{u,v,w\} \setminus z^{*}}(T' \setminus z^{*}) \leq  \\
&\leq r_{\{u,v,w\} \setminus x^{*}}(T' \setminus x^{*}) -  r_{\{u,v,w\}\setminus x^{*}}(T \setminus x^{*}) + (\ell'_{x^{*}}-1) r_{\{u,v,w\}\setminus x^{*}}(T' \setminus x^{*})  \\
&+ \ell'_{y^{*}} r_{\{u,v,w\} \setminus y^{*}}(T' \setminus y^{*})+\ell'_{z^{*}} r_{\{u,v,w\} \setminus z^{*}}(T' \setminus z^{*}) \leq \\
&\leq r_{\{u,v,w\} \setminus x^{*}}(T' \setminus x^{*}) -  r_{\{u,v,w\}\setminus x^{*}}(T \setminus x^{*}) + (\ell_{x^{*}}-1) + \ell_{y^{*}}+\ell_{z^{*}} \leq \text{ (by (\ref{equation}))}\\
&\leq \frac{1}{3} + 3-1 = \frac{7}{3}
\end{align*}
where in the second line, we used $\ell_{x^{*}} \geq 1$ to get $-\ell_{x^{*}} r_{\{u,v,w\}\setminus x^{*}}(T \setminus x^{*}) \leq -r_{\{u,v,w\}\setminus x^{*}}(T \setminus x^{*})$ and in the in third line we used $\ell'_{x^{*}} \geq 1$ to get  
$(\ell'_{x^{*}}-1) r_{\{u,v,w\}\setminus x^{*}}(T' \setminus x^{*}) \leq (\ell'_{x^{*}}-1)$. Thus,
$$
    A \leq \frac{7}{3} 
$$
Suppose that $|I| \leq 1$. Then, for every $x \in S$, there is at most one match in which a team in $\{u,v,w\} \setminus x$ loses from a team in $(T \setminus x)  \setminus (S \setminus x)$. Therefore, by Corollary \ref{corollary}, $r_{\{u,v,w\}\setminus x}(T \setminus x) \geq \frac{2}{3}$. Also, we clearly have that $r_{\{u,v,w\}\setminus x}(T' \setminus x) \leq 1$. Thus, 
$$A = \sum_{x \in \{u,v,w\}} \ell'_x r_{\{u,v,w\} \setminus x}(T' \setminus x) - \ell_x r_{\{u,v,w\}\setminus x}(T \setminus x) \leq \ell'_u + \ell'_v + \ell'_w - \frac{2}{3}(\ell_u + \ell_v + \ell_w) = 3-\frac{2}{3} \times 3 = 1$$
Thus, 
$$A \leq 1$$ when $|I| \leq 1$ as desired.
\end{myproof}

\newenvironment{myproofB}{\paragraph{Proof of Lemma \ref{lemmaB}}}{\hfill $\square$}

\begin{myproofB}\label{appendix_B}

By Theorem \ref{theorem1} we have that for all permutations $(x,y,z)$ of $\{u,v,w\}$
$$
    r_{\{z,y\}}(T' \setminus x)- r_{\{z,y\}}(T \setminus x) = r_{\{u,v,w\} \setminus x}(T' \setminus x)- r_{\{u,v,w\} \setminus x}(T \setminus x) \leq \alpha^{RDM}_2 = \frac{1}{3}
$$
By using the above for each of the 3 terms in the sum $B$ we get 
$$
    B = \sum_{x \in S} d^{*}_x (r_{\{u,v,w\} \setminus x}(T' \setminus x)- r_{\{u,v,w\} \setminus x}(T \setminus x)) \leq \frac{d^{*}_u + d^{*}_v + d^{*}_w}{3} = \frac{|I|}{3}
$$
Suppose $|I| \leq 1$. If $|I| = 0$, then $d^{*}_u = d^{*}_v = d^{*}_w = 0$, so $B = 0$. Suppose $|I| = 1$ and WLOG team $a \notin S$ beats $u$ and $\{u,v,w\}$ win all other matches with teams outside of $S$ (in particular $v,w$ beat $t$). Then $d^{*}_v = d^{*}_w = 0$. Thus, 
$$B = d^{*}_u (r_{v,w}(T' \setminus u)- r_{v,w}(T \setminus u)) = 0$$
since $r_{v,w}(T' \setminus u) = r_{v,w}(T \setminus u) = 1$ by Lemma \ref{lemma3}, as $v,w$ beat every team outside of $\{v,w\}$ in both $T' \setminus u$ and $T \setminus u$.
\end{myproofB}

\newenvironment{myproofC}{\paragraph{Proof of Lemma \ref{lemmaC}}}{\hfill $\square$}

\begin{myproofC}\label{appendix_C}
For a tournament $H$, where $u,v,w \in H$, define $t(H) = (a'_1,a'_2,a'_3)$, where in $H$ there are exactly $a'_i$ teams in $H \setminus \{u,v,w\}$ each of which beats exactly $i$ teams out of $\{u,v,w\}$ for $i = 1,2,3$. Suppose that $e \in Q$. Then we have 3 cases for $e$:
\begin{itemize}
    \item Suppose $e$ is a match between two teams from $L_1$ or between some team from $L_1$ and one of the two teams in $\{u,v,w\}$ to which it loses. Then clearly, $t(T|_e) = (a_1-1,a_2,a_3)$ and thus 
$$r_{u,v,w}(T'|_{e})-r_{u,v,w}(T|_{e}) \leq M_{n-1}(a_1-1,a_2,a_3) $$
Notice that there are $\binom{a_1}{2} + 2a_1$ such matches.
\item Suppose now that $e$ is a match between two teams from $L_2$ or between some team from $L_2$ the the unique team in $\{u,v,w\}$ to which it loses. Then, clearly, $t(T|_e) = (a_1,a_2-1,a_3)$. Thus, 
$$r_{u,v,w}(T'|_{e})-r_{u,v,w}(T|_{e}) \leq M_{n-1}(a_1,a_2-1,a_3)$$
There are $\binom{a_2}{2} + a_2$ such matches. 
\item Finally, if $e$ is a match between two teams in $L_3$, then $t(T|_e) = (a_1,a_2,a_3-1)$. Thus,
$$r_{u,v,w}(T'|_{e})-r_{u,v,w}(T|_{e}) \leq M_{n-1}(a_1,a_2,a_3-1)$$
There are $\binom{a_3}{2}$ such matches. 
\end{itemize}
Therefore, by applying the  the above 3 bounds we get 
\begin{align*}
    C &= \sum_{e \notin G} r_{u,v,w}(T'|_{e})-r_{u,v,w}(T|_{e}) \leq \\
    &\leq (2a_1 + \binom{a_1}{2})M_{n-1}(a_1-1,a_2,a_3)+ (a_2 + \binom{a_2}{2}) M_{n-1}(a_1,a_2-1,a_3)\\
    &+ \binom{a_3}{2}M_n(a_1,a_2,a_3-1) +\sum_{e \notin G \cup Q} r_{u,v,w}(T'|_{e})-r_{u,v,w}(T|_{e}))
\end{align*}
as desired. Notice that $|Q| = 2a_1 + a_2 + \binom{a_1}{2} + \binom{a_2}{2} + \binom{a_3}{2}$.
\end{myproofC}
\begin{remark}
Notice that in Lemma \ref{lemmaC} one can obtain an even better bound for $C$ by also considering the matches between two different sets $L_i$ and $L_j$ for $i \neq j$. This could potentially be helpful in getting more precise upper bounds in Table \ref{tab:upp_bounds}. However, for our purposes the above bounds suffice.
\end{remark}

\newenvironment{myprooflI4}{\paragraph{Omitted cases from the proof of Lemma \ref{lemmaI4}\\}}{\hfill $\square$}
\begin{myprooflI4}\label{appendix_cases} 
\textit{Case 4: }$(a_1,a_2,a_3) = (2,0,0)$. Applying $\Delta$, we get 
\begin{align*}
    M_n(2,0,0) &\leq \frac{1}{\binom{n}{2}}(\frac{7}{3} + \frac{2}{3} + (4+ 1) f(1,0,0) + (\binom{n}{2}-10)f(2,0,0)) = \\
    &= \frac{1}{\binom{n}{2}}(3 + \frac{5}{6} + (\binom{n}{2}-10)\frac{23}{60}) = \frac{23}{60} = f(2,0,0)
\end{align*} \\

\textit{Case 5: } $(a_1,a_2,a_3) = (3,0,0)$. Applying $\Delta$, we get
\begin{align*}
M_n(3,0,0) &\leq \frac{1}{\binom{n}{2}}(\frac{7}{3} + \frac{3}{3} + (6+ 3) f(2,0,0) + (\binom{n}{2}-15)f(3,0,0)) = \\
    &= \frac{1}{\binom{n}{2}}(\frac{10}{3} + 9 \times \frac{23}{60} + (\binom{n}{2}-15)\frac{407}{900})=\\
    &= \frac{407}{900} + \frac{1}{\binom{n}{2}}(\frac{200 + 207}{60}-\frac{407}{60}) = \frac{407}{900} = f(3,0,0)
\end{align*}\\

\textit{Case 6} $(a_1,a_2,a_3) = (4,0,0)$. Applying $\Delta$, we get 
\begin{align*}
M_n(4,0,0) &\leq \frac{1}{\binom{n}{2}}(\frac{7}{3} + \frac{4}{3} + (2\times 4+6) f(3,0,0) + (\binom{n}{2}-21)f(4,0,0)) = \\
    &= \frac{1}{\binom{n}{2}}(\frac{11}{3} + 14 \times \frac{407}{900} + (\binom{n}{2}-21)\frac{4499}{9450}) = \\
    &= \frac{4499}{9450} + \frac{1}{\binom{n}{2}}(\frac{3300 + 4098 + 1600}{900}- \frac{8998}{900}) = \frac{4499}{9450} = f(4,0,0)
\end{align*} \\

\textit{Case 7} $(a_1,a_2,a_3) = (0,1,0)$. Applying $\Delta$, we get
\begin{align*}
M_n(0,1,0) &\leq \frac{1}{\binom{n}{2}}(\frac{7}{3} + \frac{2}{3} +  f(0,0,0) + (\binom{n}{2}-6)f(0,1,0)) = \\
    &=  \frac{1}{\binom{n}{2}}(\frac{7}{3} + \frac{2}{3} + (\binom{n}{2}-6)\frac{1}{2}) = \frac{1}{2} = f(0,1,0)
\end{align*}\\

\textit{Case 8} $(a_1,a_2,a_3) = (1,1,0)$. 
Applying $\Delta$, we get
\begin{align*}
M_n(1,1,0) &\leq \frac{1}{\binom{n}{2}}(\frac{7}{3} + \frac{3}{3} +  2f(0,1,0) + f(1,0,0)+ (\binom{n}{2}-9)f(1,1,0)) = \\
    &= \frac{1}{\binom{n}{2}}(\frac{10}{3}  +  \frac{2}{2} + \frac{1}{6}+ (\binom{n}{2}-9)\frac{1}{2})=\\
    &= \frac{1}{2} + \frac{1}{\binom{n}{2}}(\frac{27}{6}-\frac{9}{2}) = \frac{1}{2} = f(1,1,0)
\end{align*}\\

\textit{Case 9} $(a_1,a_2,a_3) = (2,1,0)$. Applying $\Delta$, we get
\begin{align*}
M_n(2,1,0) &\leq \frac{1}{\binom{n}{2}}(\frac{7}{3} + \frac{4}{3} +  (4+1)f(1,1,0) + f(2,0,0)+ (\binom{n}{2}-13)f(2,1,0)) = \\
  &= \frac{1}{\binom{n}{2}}(\frac{11}{3} + \frac{5}{2}+\frac{23}{60}+ (\binom{n}{2}-13)\frac{131}{260}) = \\
  &= \frac{131}{260} + \frac{1}{\binom{n}{2}}(\frac{393}{60}-\frac{131}{20}) = \frac{131}{260} = f(2,1,0)
\end{align*}\\

\textit{Case 10} $(a_1,a_2,a_3) = (0,0,1)$. In this case a vertex $t$ beats $u,v,w$ and they lose to everyone else. By symmetry the probability a team from $u,v,w$ wins is the same no matter what the matches between them are. Thus, $M(0,0,1) = 0 = f(0,0,0)$. \\

\textit{Case 11} $(a_1,a_2,a_3) = (1,0,1)$. Applying $\Delta$, we get
\begin{align*}
M_n(1,0,1) &\leq \frac{1}{\binom{n}{2}}(\frac{7}{3} + \frac{4}{3} +  f(0,0,1)  (\binom{n}{2}-9)f(1,0,1)) = \\
&= \frac{1}{\binom{n}{2}}(\frac{7}{3} + \frac{4}{3} + (\binom{n}{2}-9)\frac{11}{27}) = \frac{11}{27} + \frac{1}{\binom{n}{2}}(\frac{11}{3}-\frac{11}{3}) = \frac{11}{27} = f(1,0,1)\\
\end{align*}

\end{myprooflI4}

\subsection{Omitted proofs from Section \ref{copiessection}}\label{appendix_sec5}

\newenvironment{myproofto13}{\paragraph{Proof of Lemma \ref{lemmacopies}}}{\hfill $\square$}

\begin{myproofto13}
We will construct $q$ and prove the statement by induction on the quantity $|T| + m$. As a base case suppose $|T| + m \leq 3$. In order to define $q$, we need $|T| \geq 2$ and this implies $m = 1$ and $|T| = 2$. In this case there is a single tournament in $T' = T \in Syb(T,u_1,1)$. One can simply define $q(1, T, u_1, v) = r_{v}(T)$. Suppose that we have defined $q(m,T,u_1,v)$ for all $T,u_1 \in T, v \in T \setminus u_1$ and $m$ that satisfy $|T| +m < N$ and that $r_{v}(T') = q(m,T,u_1,v)$ for all $T' \in Syb(T,u_1,m)$ when the parameters satisfy the former constraints.  \\
\indent We will now construct $q$ and show the statement of the Lemma simultaneously for all settings which satisfy $m + |T| = N$. Let $T, u_1 \in T, v \in T \setminus u_1, m \in \mathbb{N}$ satisfy $m + |T| = N$. If $m =1$ the statement is clear because there is a single tournament $T' = T \in Syb(T,u_1,m)$ and thus $q(1,u_1,T,v) = r_v(T) = r_v(T')$. Suppose $m \geq 2$ and $T' \in Syb(T,u_1,m)$. Let $A$ be the set of teams in $T \setminus u_1$ that $u_1$ beats and $B$ the set of teams in $T \setminus u_1$ that $u_1$ loses to. Suppose $|A| = a \geq 0$ and $|B| = b \geq 0$. If the first match is between a team from $B$ and a Sybil or between two Sybils, then a Sybil is eliminated and the resulting tournament is a member of $Syb(T,u_1,m-1)$ (possibly after relabeling $u_1$). By the induction hypothesis, $v$ wins with probability $q(m-1,T,u_1,v)$. There are $\binom{m}{2} +bm$ such matches. If a match in which a team $t \in T \setminus \{v,u_1\}$ loses is played, then the resulting tournament is a member of $Syb(T \setminus t, u_1, m)$. Thus, by the induction hypothesis $v$ wins in the resulting tournament with probability $q(m, T \setminus t,u_1,v)$. If $d_t$ is the number of teams $t$ loses to in $T$, there are $d_t + m-1$ such matches 
if $t \in A$ and $d_t$ such matches if $t \in B$. 
If $v$ loses then, it clearly wins with probability 0. Thus, by first-step analysis we have that 
\begin{align*}
    r_{v}(T') &= \frac{\binom{m}{2} + bm}{\binom{m+a+b}{2}}q(m-1,T,u_1,v) + \sum_{t \in A\setminus v} \frac{d_t + m-1}{\binom{m+a+b}{2}} q(m, T \setminus t, u_1,v) \\
    &+ \sum_{t \in B\setminus v} \frac{d_t}{\binom{m+a+b}{2}} q(m, T \setminus t, u_1,v)
\end{align*}
Notice that $d_t$ depends only on the matches within $T$. Therefore, all terms in the RHS only depend on $T,u_1,m$ and $v$, which finishes the induction and gives us a way to define $q(m,T,u_1,v)$ as the above expression. Also, since $T' \in Syb(T,u_1,m)$ was arbitrary we get $r_{v}(T') = q(m,T,u_1,v)$ for all $T' \in Syb(T,u_1,m)$. 

\end{myproofto13}

\newenvironment{myproofto0}{\paragraph{Proof of Theorem \ref{theoremcopiesto0}}}{\hfill $\square$}

\begin{myproofto0}
Let $T$ be a tournament and $u_1 \in T$ a team, which is not a Condorcet winner. As in Definition 2.10, we need to show that 
$$\lim_{m \to \infty}\max_{T' \in Syb(T,u_1,m)} r_{u_1, \ldots, u_m}(T') = 0$$
Notice that 
$$\max_{T' \in Syb(T,u_1,m)} r_{u_1, \ldots, u_m}(T') = h(m,T,u_1)$$
and that $h(m,T,u_1) \leq p(m,T,u_1)$ so it suffices to show the stronger claim $\lim_{m \to \infty} p(m,T,u_1) = 0$ \\
\indent Recall that $A$ ($B$) denotes the set of teams $u_1$ beats (loses to) in $T$. We prove the statement by induction on $|A|$. If $|A| = 0$, then clearly $T \setminus u_1$ consists of only teams of $B$, which beat $u_1, \ldots,u_m$ and thus, $p(m,T,u_1) = 0$ for all $m$ by Lemma \ref{lemma3}. Let $|A| = a \geq 1$ and $|B| = b \geq 1$ ($u_1$ is not a Condorcet winner). Suppose that the statement is true for all $T$ and $u_1$ with $|A| = a' < a$.  By above $|T| \geq 3$. Let for a team $v$ denote by $d_v$ the number of teams to which it loses in $T$. Similarly to the proof of Lemma \ref{lemmacopies} by first-step analysis we have the following relation 
$$
p(m,T,u_1) = \frac{\binom{m}{2} + bm}{\binom{m+a+b}{2}}p(m-1,T,u_1) + \sum_{v \in B} \frac{d_v}{\binom{n+a+b}{2}} p(m, T \setminus v, u_1) + \sum_{v \in A} \frac{d_v + m-1}{\binom{m+a+b}{2}} p(m, T \setminus v, u) 
$$
 Notice that if $v \in B$, then $d_v \leq a+b$ as $v$ beats the Sybils of $u_1$, $|B| \leq b$, and $p(m, T \setminus v, u_1) \leq 1$. Therefore,
$$\sum_{v \in B} \frac{d_v}{\binom{m+a+b}{2}} p(m, T \setminus v, u_1) \leq \frac{b(a+b)}{\binom{m+a+b}{2}} \leq \frac{C}{m^2}$$
for some $C = C(a,b)$. Also we know that for every $v \in A$, $u_1$ loses to at least one team in $T \setminus v$ (a team in $B$). Therefore, by the inductive hypothesis $\lim_{m \to \infty} p(m, T \setminus v, u_1) = 0$ for all $v \in A$. Let $\epsilon > 0$ and as $A$ does not depend on $m$, we can choose $N$ such that for $m > N$, $p(m, T \setminus v, u_1) < \epsilon$ for all $v \in A$. Also notice that $\sum_{v \in A} d_v + m-1 \leq ma + \binom{a+b+1}{2} \leq Dm$, where $D$ is some constant depending on $a$ and $b$. Thus, collecting the above observations we get for all $n$ big enough that
$$p(m,T,u) \leq \frac{\binom{m}{2} + bm}{\binom{m+a+b}{2}}p(m-1,T,u_1) + \frac{mD}{\binom{m+a+b}{2}}\epsilon + \frac{C}{m^2} $$
By Lemma \ref{decreasing} the function $\frac{\binom{m}{2} + bm}{\binom{m+1+b}{2}}$ is decreasing in $b$ for $b \geq 1$. We use this, fact together with $ a \geq 1$ to show $\frac{\binom{m}{2} + bm}{\binom{m+a+b}{2}} \leq \frac{\binom{m}{2} + m}{\binom{m+2}{2}}$. We also use that $\frac{mD}{\binom{m+a+b}{2}}\epsilon \leq \frac{F}{m}\epsilon$ for some constant $F$. Therefore, by upper bounding the above we get
$$p(m,T,u_1) \leq \frac{\binom{m}{2} + m}{\binom{m+2}{2}}p(m-1,T,u) + \frac{F}{m}\epsilon + \frac{C}{m^2} = \frac{\binom{m+1}{2} }{\binom{m+2}{2}}p(m-1,T,u_1) + \frac{F}{m}\epsilon + \frac{C}{m^2}$$ Therefore, 
$$\binom{m+2}{2} p(m,T,u_1) \leq \binom{m+1}{2}p(m-1,T,u_1) + Sm \epsilon + Q$$ where $S,Q$ are some constants that depend only on the tournament $T$ and the above holds for $m >N$ sufficiently large. Repeating the above inequality $m-N$ times and adding them up as a telescoping sum we get that 
$$\binom{m+2}{2} p(m,T,u_1) \leq \binom{N+1}{2}p(N-1,T,u_1) + S'\epsilon m^2 + Q(m-N)$$
and thus 
$$p(m,T,u_1) \leq  \frac{\binom{N+1}{2}p(N-1,T,u)}{\binom{m+2}{2}} + \frac{S'\epsilon m^2}{\binom{m+2}{2}} + \frac{Q(m-N)}{\binom{m+2}{2}}$$
The first and the third term go to 0 as $m \to \infty$, and the second term goes to $2S'\epsilon$. Therefore, 
$$\lim_{m \to \infty} p(m,T,u_1) \leq 2S'\epsilon$$
Now, letting $\epsilon \to 0$, we get the desired result.
\end{myproofto0}

\begin{lemma}\label{decreasing}
The function $\frac{\binom{m}{2} + bm}{\binom{m+1+b}{2}}$ is decreasing in $b \in \mathbb{Z}$ for $b \geq 1$
\end{lemma}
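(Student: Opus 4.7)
The plan is to reduce the claim to a short polynomial inequality in $b$. First I would rewrite the function in closed form: using $\binom{m}{2} = \frac{m(m-1)}{2}$ and $\binom{m+1+b}{2} = \frac{(m+1+b)(m+b)}{2}$, the factors of $\tfrac12$ cancel and we obtain
$$f(b)\ :=\ \frac{\binom{m}{2} + bm}{\binom{m+1+b}{2}}\ =\ \frac{m(m+2b-1)}{(m+1+b)(m+b)}.$$

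Next I would set up the comparison $f(b) \ge f(b+1)$ (the word ``decreasing'' here is most naturally read as non-increasing, which is also all that is needed in the proof of Theorem~\ref{theoremcopiesto0}, where only the consequence $f(b) \le f(1)$ is invoked). Since every factor on both sides is positive whenever $m \ge 2$ and $b \ge 1$, clearing denominators and cancelling the common factors $m$ and $(m+b+1)$ reduces the claim to
$$(m+2b-1)(m+b+2)\ \ge\ (m+2b+1)(m+b).$$

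The final step is to expand both sides and observe that the $m^2$, $3mb$, $2b^2$, and $m$ terms on each side cancel exactly, leaving the clean linear inequality $3b - 2 \ge b$, i.e.\ $b \ge 1$, which is precisely the hypothesis. The fact that equality holds exactly at $b=1$ is consistent with $f(1)=f(2)=\tfrac{m}{m+2}$, so $f$ is not strictly decreasing but only weakly so, which is all the downstream proof uses. The only ``obstacle'' is keeping the polynomial expansion organized; there is no conceptual subtlety.
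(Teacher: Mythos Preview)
Your proof is correct and follows essentially the same route as the paper's: both compare $f(b)$ with $f(b+1)$, clear denominators to obtain the identical inequality $(m+2b-1)(m+b+2) \ge (m+2b+1)(m+b)$, and simplify to $2b \ge 2$. Your additional remark that equality holds at $b=1$ (so the function is only weakly decreasing) is a nice observation the paper does not make explicit.
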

\begin{proof}
Indeed, one can show that for $b \geq 1$
\begin{align*}
\frac{\binom{m}{2} + bm}{\binom{m+1+b}{2}} &\geq \frac{\binom{m}{2} + (b+1)m}{\binom{m+2+b}{2}} \\
\iff \frac{m-1 + 2b}{(m+b+1)(m+b)} &\geq \frac{m-1 + 2b+2}{(m+2+b)(m+b+1)} \\
\iff (m-1+2b)(m+2+b) &\geq (m+b)(m+2b+1) \\
\iff 2b &\geq 2
\end{align*}
which holds for $b \geq 1$.
\end{proof}




\end{document}